\newcommand{\field}[1]{\mathbb{#1}}
\newcommand{\N}{\field{N}}
\newcommand{\R}{\field{R}}
\newcommand{\C}{\field{C}}
\newcommand{\Z}{\field{Z}}
\newcommand{\OO}{\mathcal O}
\newcommand{\EE}{\mathcal E}
\newcommand{\eps}{\varepsilon}
\newcommand{\ph}{\varphi}
\newcommand{\norm}[1]{\mbox{$\left\| #1 \right\|$}}
\newcommand{\sprod}[2]{\mbox{$\left\langle #1,#2 \right\rangle$}}
\newcommand{\form}[3]{\mbox{$\left\langle #1\left|#2 \right|#3 \right\rangle $}}
\newcommand{\Rea}{\operatorname{Re}}
\newtheorem{theorem}{Theorem}[section]
\newtheorem{lemma}[theorem]{Lemma}
\newtheorem{corollary}[theorem]{Corollary}
\newtheorem{proposition}[theorem]{Proposition}
\theoremstyle{plain}
\title{Existence of Minimizers \\ in Restricted Hartree--Fock Theory}
\author{
Fabian Hantsch\\
Universit\"at Stuttgart, Fachbereich Mathematik\\
70550 Stuttgart, Germany}
\date{}
\begin{document}
\maketitle
\begin{abstract}
In this note we establish the existence of ground states for atoms within several restricted Hartree--Fock theories. It is shown, for example, that there exists a ground state for closed shell atoms with $N$ electrons and nuclear charge $Z \geq N-1$. This has to be compared with the general Hartree--Fock theory where the existence of a minimizer is known for $Z >N-1$ only.
\end{abstract}

\section{Introduction}

Computations of the electronic structure of atoms and molecules in quantum chemistry in general rely on numerical solutions of simplified versions of the quantum many-body problem at hand. Among those, the Hartree--Fock approximation often serves as a starting point for more accurate approximations such as multi-configuration methods, see for example \cite{Helgaker,SzaboOstlund}. In the simplest version of Hartree--Fock theory the energy is minimized with respect to antisymmetric tensor products of orthonormal one-electron orbitals, the so-called single Slater determinants, and further restrictions are imposed in numerical procedures implementing this variational problem \cite{Cances2003}. In any case the question arises whether a minimizer exists. This paper is concerned with several restricted Hartree--Fock theories for atoms where the one-electron orbitals are products of space and spin wave functions. For each of the considered restrictions we investigate the existence of a minimizer both for neutral atoms and positive ions, as well as for simply charged negative ions.

The existence of a minimizer in the \emph{general Hartree--Fock (GHF)} theory for neutral atoms or positive ions was first established in 1977 by Lieb and Simon \cite{LS1977}. No constraints were imposed in their work besides the orthonormality of the one-electron orbitals. In the meantime there has been remarkable further progress in the study of the variational problem for the Hartree--Fock energy functional. It is known, for example, that there exists a sequence of critical points for this functional \cite{Lions1987}, and convergence properties of various algorithms used for the approximation of critical points were investigated in \cite{CB2000,Cances2000,Levitt}. 

The main concern of this article is the minimization of the Hartree--Fock energy functional under additional constraints.
Our general assumption is that the one-electron states are products of space and spin functions.
First, we treat the \emph{restricted Hartree--Fock (RHF)} functional for closed shell atoms with prescribed angular momentum quantum numbers. Second, we drop the latter requirement, i.e.~we consider atoms with an even number of electrons, where only pairs of spin up and spin down electrons with the same spatial function occur. The corresponding energy functional will be called \emph{spin-restricted Hartree--Fock (SRHF)} functional. It turns out that there exists a ground state in both cases, if $Z \geq N-1$, where $Z$ denotes the nuclear charge and $N$ the number of electrons. The existence of a ground state in the case $Z=N-1$ reminds of the well-known stability of closed shell configurations in chemistry. Third, we look at another restricted Hartree--Fock functional, which is called \emph{unrestricted Hartree--Fock (UHF)} functional in the chemical literature, and must not be confused with the GHF functional. In the UHF setting, the spatial functions corresponding to spin up resp.~spin down functions are chosen independently from each other, but still are assumed to have prescribed angular momenta. In this case a ground state exists if $Z>N-1$, and we provide sufficient conditions under which this is also true if $Z=N-1$. For example, there exists a ground state for $Z=N-1$ in the spinless case (i.e.~if all spins point in the same direction) with two angular momentum shells $\ell_1=0$, $\ell_2 >0$. 

For certain closed shell atoms (e.g.~He, Ne) it is known that the minimization problems for  the general and restricted Hartree--Fock functionals coincide, if $Z \gg N$ \cite{GH2011}. On the other hand there are also cases where they differ \cite{RuskStill}, see \cite{GH2011} for an explanation of this fact.
Nevertheless, the restricted ground states are always critical points of the GHF functional. This is due to the fact that the considered constraints do not require additional Lagrange multipliers in the Euler--Lagrange equations. Thus, this paper also establishes the existence of critical points for the GHF functional in the case $Z=N-1$. 
To our knowledge, the only previous result providing the existence of critical points for the GHF functional in the case $Z=N-1$ is given in the paper \cite{CB2000} of Canc\`es and Le Bris, which in fact even holds for arbitrary $Z>0$. But in general, the critical points constructed in their paper only correspond either to local (not global) minima or saddle points.

In the literature the existence of minimizers for restricted Hartree--Fock functionals has previously been studied for special cases.
Based on Reeken's paper \cite{Reeken1970} on the solutions of the Hartree equation, Bazley and Seydel \cite{BazleySeydel} proved the existence of a minimizer for the spin-restricted Hartree--Fock functional of Helium $(N=2)$, which is given by the restricted Hartree functional. For this functional it is known that there exists a minimizer even if $Z=1=N-1$, see \cite[Theorem II.2]{Lions1987}.
In our paper we extend this result to arbitrary numbers of filled shells.
Lieb and Simon generalize their GHF existence result \cite{LS1977} to certain restricted situations in \cite{LS1974}, but their theorem does not cover the restrictions discussed in this paper. Though, this article has been strongly inspired by their work \cite{LS1977}.
In \cite{Lions1987}, Lions treats restricted Hartree--Fock equations, which arise as the Euler--Lagrange equations of the RHF functional. He proves the existence of a sequence of solutions to these equations provided $Z \geq N$. Lions' proof relies, however, on the unproven assertion that all eigenvalues of a radial Fock operator are simple.
His approach is motivated by the paper of Wolkowisky \cite{Wolkowisky} who shows the existence of solutions for a system of restricted Hartree-type equations. 
A numerical approach to restricted Hartree--Fock theory may be found in the book of Froese Fischer \cite{Froese-Fischer}.
Finally, we mention the article of Solovej \cite{Solovej2003}, where he proves the existence of a universal constant $Q>0$ so that there is no GHF minimizer for $Z \leq N-Q$. 
This establishes the \emph{ionization conjecture} within the Hartree--Fock theory.
The question whether or not there is a GHF minimizer for $Z=N-1$ is open.

The paper is organized as follows: In Section~\ref{sec:closed shell} we introduce the restricted Hartree--Fock functional for closed shell atoms with prescribed angular momentum quantum numbers and prove an existence theorem for minimizers of this functional. The Section~\ref{sec:applications} is devoted to generalizations of the RHF existence theorem to the SRHF and UHF functionals. A derivation of the RHF functional in the closed shell case can be found in Section~\ref{app}. Finally, there is an appendix containing technical lemmas.

\emph{Acknowledgment.} The author thanks M.~Griesemer for drawing his attention to the problem and for helpful discussions. The author is supported by the \emph{Studienstiftung des Deutschen Volkes}.

\section{Minimizers for Closed Shell Atoms}\label{sec:closed shell}

The simplest Hartree--Fock approximation for atoms consists in restricting the admissible $N$-electron states to the set of single Slater determinants, which are of the form
\begin{equation}
\label{SD}
(\ph_1 \wedge \dots \wedge \ph_N)(x_1,\dots,x_N) = \frac{1}{\sqrt{N!}} \sum_{\sigma \in S_N} \operatorname{sgn}(\sigma) \ph_{\sigma(1)}(x_1) \dots \ph_{\sigma(N)}(x_N),
\end{equation}
where $S_N$ denotes the symmetric group of degree $N$, $\operatorname{sgn}(\sigma)$ is the sign of a permutation $\sigma$, and $\ph_1,\dots,\ph_N$ denote orthonormal $L^2(\R^3;\C^2)$-functions. It is well-known, that the energy of an atom with nuclear charge $Z$ and $N$ electrons in the state \eqref{SD} is given by the \emph{general Hartree--Fock (GHF)} functional
\begin{eqnarray}
\label{HF functional}
\EE^{HF}(\ph_1,\dots,\ph_N) = \sum_{j=1}^N \int |\nabla\ph_j|^2- \frac{Z}{|x|}|\ph_j|^2 \,dx 
+ \frac{1}{2} \int\!\!\!\int \frac{\rho(x)\rho(y)-|\tau(x,y)|^2}{|x-y|} \,dx\,dy
\end{eqnarray}
where 
$$
\tau(x,y) := \sum_{j=1}^N \ph_j(x)\overline{\ph_j(y)},  \qquad\qquad \rho(x) := \sum_{j=1}^N |\ph_j(x)|^2
$$
denote the density matrix and the electronic density, respectively.
We write $x=({\bf x}, \mu)\in \R^3 \times \{\pm 1\}$, $\int \,dx$ refers to integration with respect to the product of Lebesgue and counting measure, and $|x-y|=|{\bf x}-{\bf y}|$. 

Given a closed shell atom with $s_0 \in \N$ shells of prescribed angular momentum quantum numbers $\ell_1,\dots,\ell_{s_0} \in \N_0$, we impose the following form on the one-electron orbitals
\begin{equation}
\label{restr form}
\ph_{jm\sigma}({\bf x},\mu) = \frac{f_j(|{\bf x}|)}{|{\bf x}|} Y_{\ell_j m}({\bf x}) \delta_{\sigma \mu}, \quad j=1,\dots,s_0, \ m = -\ell_j,\dots,+\ell_j, \ \sigma = \pm 1,
\end{equation}
where the radial functions $f_j$ are in $L^2(\R_+)$ and 
\begin{equation}\label{constraint}
\langle f_i, f_j \rangle := \int_{\R_+}\overline{f_i}f_j \,dr = \delta_{ij}, \quad \text{if } \ell_i=\ell_j,
\end{equation}
in order to ensure the orthonormality of the functions \eqref{restr form}.
Here $Y_{\ell m}$ denote the usual spherical harmonics.
The Hartree--Fock energy of the Slater determinant built by the orbitals \eqref{restr form} is given by the \emph{restricted Hartree--Fock (RHF)} functional (compare Section~\ref{app}):
\begin{eqnarray}
\lefteqn{\EE^{RHF}(f_1,\dots,f_{s_0}) = 2 \sum_{j=1}^{s_0} (2\ell_j + 1) \left( \int_{\R_+} |f_j'|^2 + \frac{\ell_j(\ell_j+1)}{r^2} |f_j|^2 - \frac{Z}{r} |f_j|^2 \,dr \right)} \nonumber \\
& & + \frac{1}{2} \sum_{j,k=1}^{s_0} (2\ell_j+1)(2\ell_k+1) \Bigg( \int\!\!\!\int_{(\R_+)^2} 4 \frac{|f_j(r)|^2 |f_k(s)|^2}{\max\{r,s\}} \nonumber \\
& & \qquad  \qquad\qquad\qquad\qquad\qquad- 2 \overline{f_j(r)f_k(s)}U_{\ell_j \ell_k}(r,s) f_k(r) f_j(s) \,dr\,ds  \Bigg). \label{rest energy}
\end{eqnarray}
The integral kernels $U_{\ell_j\ell_k}$ appearing in the last term on the right-hand side are given in \eqref{Uform}. We shall only need their properties collected in Lemma~\ref{lm:U-ell}.

Let $H^1_0(\R_+)$ denote the completion of $C^\infty_0(\R_+)$ with respect to the $H^1(\R_+)$-norm. 
The RHF functional \eqref{rest energy} is bounded below, if the functions $f_1,\dots,f_{s_0}$ are in $H^1_0(\R_+)$ and obey the constraints \eqref{constraint}, see Lemma~\ref{lm:technical}.
We define the RHF ground state energy by
\begin{equation}
\label{min prob}
E(N,Z) = \inf\{\EE^{RHF}(f_1,\dots,f_{s_0}) | f_1,\dots, f_{s_0} \in H^1_0(\R_+), \sprod{f_i}{f_j} = \delta_{ij} \text{ if } \ell_i= \ell_j\},
\end{equation}
where the dependence of $E(N,Z)$ on $\ell_1,\dots,\ell_{s_0}$ is suppressed. 
The main question of this paper is whether the infimum in \eqref{min prob} is actually a minimum.

If there exist minimizing functions $f_1,\dots,f_{s_0}$ obeying the constraints of \eqref{min prob}, then they are solutions of the corresponding Euler--Lagrange equations, which we may assume to have the form (see Remark (b) below)
\begin{equation}\label{EL eqn}
H_{\ell_i}f_i = \eps_i f_i, \qquad i=1,\dots,s_0,
\end{equation}
with \emph{radial Fock operators} given by
\begin{eqnarray*}
H_{\ell_i} & = & -\partial^2_r + \frac{\ell_i(\ell_i+1)}{r^2} -\frac{Z}{r} +2U - K_{\ell_i}, \qquad i=1,\dots,s_0, \quad \text{ where} \\
(Uf)(r) & = & \sum_{j=1}^{s_0} (2\ell_j+1) \int_{\R_+} \frac{|f_j(s)|^2}{\max\{r,s\}} \,ds f(r), \\
(K_{\ell}f)(r) & = & \sum_{j=1}^{s_0} (2\ell_j+1) f_j(r) \int_{\R_+} \overline{f_j(s)}f(s) U_{\ell \ell_j}(r,s) \,ds.
\end{eqnarray*}
We suppress the dependence of the operators $U$, $K_{\ell}$ and thus $H_{\ell_i}$ on the functions $f_1,\dots,f_{s_0}$. 
The Euler--Lagrange equations \eqref{EL eqn}, called \emph{Hartree--Fock equations}, form a set
of $s_0$ coupled non-linear eigenvalue equations for the functions $f_1,\dots,f_{s_0}$.

\pagebreak

\noindent \emph{Remarks.}
\begin{itemize}
\item[(a)] By Lemma~\ref{lm:technical}, the operators $H_{\ell_i}$ are symmetric semi-bounded operators on $C_0^\infty(\R_+)$. Therefore, minimizing functions $f_1,\dots,f_{s_0}$ obeying the constraints of \eqref{min prob} are in the domain $D(H_{\ell_i})$ of the Friedrichs extension of $H_{\ell_i}$, which is contained in $H^1_0(\R_+)$.
\item[(b)] The Euler--Lagrange equations for minimizing functions $f_1,\dots,f_{s_0}$ obeying the constraints of \eqref{min prob} are given by $H_{\ell_i}f_i = \sum_j \eps_{ij}f_j$, where the sum runs over all indices $j$ with $\ell_j = \ell_i$.
Since the functional $\EE^{RHF}$ is invariant under unitary transformations of the subspaces of $L^2(\R_+)$ spanned by all radial functions $f_j$ with equal angular momentum quantum numbers, the minimizing functions $f_1,\dots,f_{s_0}$ can always be chosen as eigenfunctions of the radial Fock operators.
\item[(c)]
The constraints of \eqref{min prob} may be relaxed without lowering the ground state energy, more precisely
$E(N,Z)=\tilde E(N,Z)$ for
\begin{eqnarray}
\tilde E (N,Z) & = & \inf \{\EE^{RHF}(f_1,\dots,f_{s_0}) | f_1,\dots, f_{s_0} \in H_0^1(\R_+), \sprod{f_i}{f_j} = 0 \text{ if } \ell_i = \ell_j \nonumber \\
& & \qquad \qquad \qquad\qquad \quad\quad \text{ and } i \neq j,
\norm{f_i} \leq 1 \text{ for all } i \}. \label{generalized minimization problem}
\end{eqnarray}
This can be seen using similar arguments as for the general Hartree--Fock functional in \cite[section II.2]{Lions1987}.
The following theorem shows that the relaxed minimization problem always possesses a minimizer.
\end{itemize}

\begin{theorem}
\label{thm:existence}
Let $s_0\in\N$, $\ell_1,\dots,\ell_{s_0} \in \N_0$, and $Z>0$. Then, there exist functions $f_1,\dots,f_{s_0} \in H_0^1(\R_+)$, which minimize the RHF functional \eqref{rest energy} under the constraints
\begin{eqnarray*}
\sprod{f_i}{f_j} & = & 0 \text{ if } \ell_i = \ell_j \text{ and } i \neq j, \\
\norm{f_i} & \leq & 1 \text{ for all } i.
\end{eqnarray*}
Moreover, $f_i \in D(H_{\ell_i})$, $H_{\ell_i}f_i = \eps_i f_i$, and:
\begin{itemize}
\item[(i)] Either $\eps_i \leq 0$ or $f_i=0$. $\eps_i < 0$ implies $\norm{f_i} = 1$.
\item[(ii)] If $Z>N- 2(2\ell_i+1)$, then $f_i \neq 0$.\\ If $Z \geq N-1$, then $\|f_i\|=1$ for all $i=1,\dots,s_0$.
\item[(iii)] If $Z>N-1$, then $\eps_i<0$ and $\norm{f_i}=1$ for all $i=1,\dots,s_0$.
\end{itemize}
\end{theorem}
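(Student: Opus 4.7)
My plan is to apply the direct method of the calculus of variations to the relaxed problem~\eqref{generalized minimization problem} and then derive (i)--(iii) from the Euler--Lagrange equations. Take a minimizing sequence $(f_1^{(n)},\dots,f_{s_0}^{(n)})$ for $\tilde E(N,Z)$. Lemma~\ref{lm:technical} provides a uniform lower bound on $\EE^{RHF}$ in terms of (minus) the kinetic energies, yielding a uniform bound on $\|f_j^{(n)}\|_{H_0^1(\R_+)}$. Pass to a subsequence converging weakly in $H_0^1$ and strongly in $L^2_{\mathrm{loc}}$ (Rellich--Kondrachov) to limits $f_j$. Weak continuity of the $L^2$-inner product preserves the orthogonality constraints, and norm lower-semicontinuity gives $\|f_j\|\leq 1$, so the candidate limit is admissible.

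The core of the existence step is weak lower semi-continuity of $\EE^{RHF}$. Kinetic and centrifugal terms are convex, hence weakly lsc. The attractive integral $\int|f_j|^2/r\,dr$ is compact on $H_0^1$-bounded sets: split at $r=R$, use $L^2_{\mathrm{loc}}$ convergence on $(0,R]$ and Hardy plus the $H^1$-bound on $(R,\infty)$ for an $R^{-1}$ tail. For the interaction, the identity
\[
  \rho(x)\rho(y)-|\tau(x,y)|^2 \;=\; \sum_{k<l}\bigl|\ph_k(x)\ph_l(y)-\ph_l(x)\ph_k(y)\bigr|^2 \geq 0
\]
presents direct-minus-exchange as a positive quadratic form in the one-body density matrix, which is convex and weakly lsc (as in \cite{LS1977}). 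Thus the limit attains $\tilde E(N,Z)$. To get (i), vary tangent to the constraints, use Remark~(b) to diagonalize within shells of equal $\ell$, and obtain $H_{\ell_i}f_i=\eps_i f_i$ with $\eps_i$ the KKT multiplier for $\|f_i\|^2\leq 1$; standard KKT yields $\eps_i\leq 0$ with complementary slackness $\eps_i<0\Rightarrow\|f_i\|=1$.

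For (ii) and (iii), I would analyse the radial Fock operator asymptotically. If $f_i=0$, then $2U-K_{\ell_i}$ is built only from the $f_j$ with $j\neq i$, and at infinity $-Z/r+2U-K_{\ell_i}\sim-(Z-N_{\mathrm{oth}})/r$ with $N_{\mathrm{oth}}=2\sum_{j\neq i}(2\ell_j+1)\|f_j\|^2\leq N-2(2\ell_i+1)$. Under $Z>N-2(2\ell_i+1)$ this tail is strictly attractive Coulombic, so $H_{\ell_i}$ has infinitely many negative eigenvalues; replacing $f_i=0$ by a suitably small eigenfunction orthogonal to the finitely many $f_j$ with $\ell_j=\ell_i$ strictly lowers the energy --- a contradiction. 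If instead $\|f_i\|<1$ and $f_i\neq 0$, (i) forces $\eps_i=0$ so $f_i\in\ker H_{\ell_i}$; an asymptotic count in which exchange cancels self-interaction yields tail $\sim-(Z-(N-1))/r$, strictly attractive under $Z>N-1$, whence $0$ lies strictly above $\inf\sigma_{\mathrm{ess}}(H_{\ell_i})$ and $\eps_i<0$, contradicting the hypothesis and proving (iii).

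The \emph{main obstacle} is the critical case $Z=N-1$ in the saturation half of (ii): the asymptotic tail of $H_{\ell_i}$ vanishes, $0$ sits at threshold of the essential spectrum, and purely spectral reasoning cannot exclude a zero-energy bound state with mass loss. Ruling this out requires going beyond abstract methods and combining a virial/integration-by-parts identity for the radial Fock operator with the explicit positivity properties of the kernels $U_{\ell\ell_j}$ from Lemma~\ref{lm:U-ell}, in order to exclude a fractional-mass threshold state.
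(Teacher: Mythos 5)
Your existence step and part (i) follow the paper's route (direct method plus first--order variation), but already there you assert something false: the $L^2$ inner product is \emph{not} jointly weakly sequentially continuous, so $\sprod{f_i^{(n)}}{f_j^{(n)}}=0$ does not pass to the weak limit (two identical bumps escaping to infinity are the standard counterexample). The paper, following Lieb--Simon, repairs this by diagonalizing the limiting Gram matrix $M=(\sprod{f_j}{f_k})$, which satisfies $0\leq M\leq 1$, with a unitary change of basis under which $\EE^{RHF}$ is invariant. (Also, in this radial setting the interaction terms are in fact weakly \emph{continuous}, Lemma~\ref{lm:technical}(iv); your convexity argument is valid but not what the paper uses.) This part is fixable by standard means.

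The serious gaps are in (ii) and (iii). For the critical case $Z=N-1$ of (ii) --- the main novelty of the theorem --- you concede you have no proof and offer only an unexecuted ``virial identity'' plan. The paper's mechanism is concrete and different: Lemma~\ref{lm:decomp} isolates the dependence of $\EE^{RHF}$ on $f_i$ and gives the pointwise bound $P_i(r,s)\leq (4\ell_i+1)/\max\{r,s\}$, which encodes, via property (U3) of Lemma~\ref{lm:U-ell}, that exchange within the $i$-th shell cancels exactly one of the $2(2\ell_i+1)$ units of direct self--repulsion. Testing the second variation \eqref{decomp2} with a normalized bump $J_R$ supported in $[R,2R]$, the coefficient of $-1/R$ is at least $\bigl(Z-N+2(2\ell_i+1)\bigr)-(4\ell_i+1)\norm{f_i}^2=(4\ell_i+1)(1-\norm{f_i}^2)>0$ when $Z=N-1$ and $\norm{f_i}<1$, contradicting minimality. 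For (iii) your reasoning is not just incomplete but incorrect as stated: the exchange operator $K_{\ell_i}$ is nonlocal, so $\form{J_R}{K_{\ell_i}}{J_R}=o(1/R)$ (it needs overlap with the decaying $f_j$), and hence no exchange cancellation occurs in the Coulomb tail of the \emph{linear} Fock operator; with all shells full its effective far--field charge is $Z-N<0$, i.e.\ repulsive for $N-1<Z<N$, so $H_{\ell_i}$ need not have negative spectrum in the relevant channel --- and even if it did, $\eps_i$ is not known to be a lowest eigenvalue in the restricted theory (see Remark (b) after the theorem). The cancellation that makes (iii) work lives in the second variation of the \emph{functional}, specifically in the diagonally paired exchange term $\form{f_i\otimes h}{U_{\ell_i\ell_i}}{f_i\otimes h}$ produced by the mass--preserving perturbation $(f_i+\delta h)/\sqrt{1+\delta^2}$; this is exactly what \eqref{decomp2} with $\lambda=1$ supplies and what your spectral picture misses.
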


\noindent \emph{Remarks.} 
\begin{itemize}
\item[(a)] Theorem~\ref{thm:existence}~(ii) shows that for $Z=N-1$ there always exists a normalized minimizer for $\EE^{RHF}$. In this case we do not know whether or not $\eps_i<0$. Nevertheless, it is clear that $Z>N-2(2\ell_i+1)$ always implies $E(N,Z)<E^{(i)}(N-2(2\ell_i+1),Z)$ for all $i=1,\dots,s_0$, where $E^{(i)}(N-2(2\ell_i+1),Z)$ denotes the minimal energy in the case where all electrons of the $i$-th shell are dropped.
\item[(b)] In general Hartree--Fock theory it is known that the minimizing functions can be chosen as eigenfunctions to the $N$ lowest eigenvalues of the corresponding Fock operator. Moreover, there is a gap between the occupied and unoccupied eigenvalues \cite{BLLS1994}. 
It would be interesting to know whether similar results hold also in the restricted Hartree--Fock theory, where, unfortunately, the method of \cite{BLLS1994} is not applicable.
\end{itemize}

Before turning to the proof of Theorem~\ref{thm:existence} we introduce the following notations that will be used throughout the paper:
$$
r_> := \max\{r,s\} , \qquad r_< := \min\{r,s\} , \qquad \text{for } r,s \geq 0.
$$
We write $\EE^{RHF}(f_1,\dots, \hat f_i, \dots,f_{s_0})$ to denote the restricted Hartree--Fock functional where the electrons of the $i$-th shell are dropped. The following lemma exhibits the dependence of $\EE^{RHF}(f_1,\dots,f_i,\dots,f_{s_0})$ on $f_i$, and will be crucial for the existence of a minimizer in the critical case $Z=N-1$. It follows easily from the definition of $\EE^{RHF}$ if we set $P_i(r,s) := (2\ell_i+1)(2 r_>^{-1} - U_{\ell_i \ell_i}(r,s) )$.

\begin{lemma}
\label{lm:decomp}
Let $s_0 \in \N$, $\ell_1,\dots,\ell_{s_0} \in \N_0$, $Z>0$ and $f_1,\dots, f_{s_0} \in H^1_0(\R_+)$. Furthermore, let $i \in \{1,\dots,s_0\}$ and let $H_{\ell_i}^{(i)}$ denote the Fock operator where all electrons of the $i$-th shell are dropped. Then:
\begin{eqnarray}
\EE^{RHF}(f_1,\dots,f_i, \dots,f_{s_0}) & = & \EE^{RHF}(f_1,\dots,\hat f_i, \dots, f_{s_0}) + 2(2\ell_i+1) \form{f_i}{H_{\ell_i}^{(i)}}{f_i} \nonumber \\
& &  +  (2\ell_i+1) \form{f_i \otimes f_i}{P_i}{f_i \otimes f_i}, \label{decomp1}
\end{eqnarray}
where $P_i(r,s)=P_i(s,r)$ and 
$$
\frac{2\ell_i+1}{\max\{r,s\}} \leq P_i(r,s) \leq \frac{4\ell_i+1}{\max\{r,s\}}, \qquad r,s \geq 0.
$$
Furthermore, for all $\lambda \geq 0$:
\begin{eqnarray}
\lefteqn{\EE^{RHF}(f_1, \dots, \frac{f_i+\delta h}{\sqrt{1+\lambda\delta^2}},\dots,f_{s_0})} \nonumber \\
& = & \EE^{RHF}(f_1,\dots,f_i,\dots,f_{s_0}) + 4(2\ell_i+1)\delta\Rea\form{h}{H_{\ell_i}}{f_i} \nonumber \\
& & + 2(2\ell_i+1)\delta^2 \Big( \form{h}{H_{\ell_i}^{(i)}}{h} - \lambda \form{f_i}{H_{\ell_i}}{f_i} + \Rea\form{h \otimes h}{P_i}{f_i \otimes f_i} \nonumber \\
& & \qquad\qquad\qquad\quad + \form{f_i \otimes h + h \otimes f_i}{P_i}{f_i \otimes h} \Big) + \OO(\delta^3) \label{decomp2}
\end{eqnarray}
for $\delta \to 0$.
\end{lemma}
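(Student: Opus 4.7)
The overall plan is to separate off the $i$-th shell's contribution in the definition of $\EE^{RHF}$ by direct algebra; as the hint indicates, once $P_i(r,s):=(2\ell_i+1)(2/r_> - U_{\ell_i\ell_i}(r,s))$ is introduced, both identities are essentially bookkeeping.

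For \eqref{decomp1}, I would split the single-particle sum $\sum_{j=1}^{s_0}$ in \eqref{rest energy} into the term $j=i$ and the rest. The $j=i$ term supplies the kinetic, centrifugal, and nuclear-attraction pieces of $\langle f_i|H_{\ell_i}^{(i)}|f_i\rangle$, while the $j\neq i$ sum is exactly the single-particle part of $\EE^{RHF}(f_1,\dots,\hat f_i,\dots,f_{s_0})$. For the two-body double sum $\sum_{j,k=1}^{s_0}$ I would split into three regimes: (i) $j,k\neq i$, producing the interaction part of $\EE^{RHF}(f_1,\dots,\hat f_i,\dots,f_{s_0})$; (ii) exactly one of $j,k$ equals $i$, where the two resulting sums coincide by renaming variables and using the $(r,s)$-symmetry of $U_{\ell_i\ell_k}$ listed in Lemma~\ref{lm:U-ell}, together giving $2(2\ell_i+1)\langle f_i|2U^{(i)}-K_{\ell_i}^{(i)}|f_i\rangle$; and (iii) $j=k=i$, yielding exactly $(2\ell_i+1)\langle f_i\otimes f_i|P_i|f_i\otimes f_i\rangle$ by the choice of $P_i$. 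Combining the single-particle $j=i$ term with (ii) completes $2(2\ell_i+1)\langle f_i|H_{\ell_i}^{(i)}|f_i\rangle$. The symmetry $P_i(r,s)=P_i(s,r)$ and the two-sided bound reduce to the standard inequality $1/((2\ell_i+1)r_>)\le U_{\ell_i\ell_i}(r,s)\le 1/r_>$ from Lemma~\ref{lm:U-ell}.

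For \eqref{decomp2}, I would set $g := f_i+\delta h$ and $\tilde g := g/\sqrt{1+\lambda\delta^2}$, apply \eqref{decomp1} to $\EE^{RHF}(\dots,\tilde g,\dots)$, and expand using $1/(1+\lambda\delta^2)^k = 1-k\lambda\delta^2+\OO(\delta^4)$ together with the multilinear expansion of $\langle g|H_{\ell_i}^{(i)}|g\rangle$ and $\langle g\otimes g|P_i|g\otimes g\rangle$ in $\delta$. The crucial observation that makes the linear and $\lambda$-terms collapse is the operator identity
$$H_{\ell_i}f_i \;=\; H_{\ell_i}^{(i)}f_i + Q_{f_i}f_i,\qquad (Q_{f_i}g)(r) \;:=\; g(r)\int_{\R_+}P_i(r,s)|f_i(s)|^2\,ds,$$
which holds because the contribution of the $i$-th shell to $2U-K_{\ell_i}$ applied to $f_i$ is, by the definition of $P_i$, exactly $Q_{f_i}f_i$. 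This identity collapses the coefficient of $\delta$, namely $4(2\ell_i+1)\Rea\langle h|H_{\ell_i}^{(i)}|f_i\rangle+4(2\ell_i+1)\Rea\langle h\otimes f_i|P_i|f_i\otimes f_i\rangle$, into $4(2\ell_i+1)\Rea\langle h|H_{\ell_i}|f_i\rangle$; and it converts the $-\lambda\delta^2$ contribution $2(2\ell_i+1)\langle f_i|H_{\ell_i}^{(i)}|f_i\rangle+2(2\ell_i+1)\langle f_i\otimes f_i|P_i|f_i\otimes f_i\rangle$ (coming from the expansions of the two denominators) into $2(2\ell_i+1)\langle f_i|H_{\ell_i}|f_i\rangle$ as stated. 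The remaining $\delta^2$ terms arise from the purely multilinear second-order expansion of $\langle g\otimes g|P_i|g\otimes g\rangle$; after exploiting $P_i(r,s)=P_i(s,r)$ to group the two symmetric cross-terms, they reproduce $\Rea\langle h\otimes h|P_i|f_i\otimes f_i\rangle$ and $\langle f_i\otimes h+h\otimes f_i|P_i|f_i\otimes h\rangle$ with the stated prefactor $2(2\ell_i+1)$.

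The whole argument is routine algebraic bookkeeping. The only real nuisance I anticipate is tracking the combinatorial factor $2$ produced by the symmetric doubling of the mixed cross-shell terms in (ii), and applying the operator identity $H_{\ell_i}f_i=(H_{\ell_i}^{(i)}+Q_{f_i})f_i$ at both the $\delta$ and the $\lambda\delta^2$ levels so that $H_{\ell_i}$, rather than $H_{\ell_i}^{(i)}$, appears in the final expansion.
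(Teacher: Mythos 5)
Your proposal is correct and is exactly the direct computation the paper has in mind: the paper states Lemma~\ref{lm:decomp} without a written proof, remarking only that it ``follows easily from the definition of $\EE^{RHF}$'' once $P_i(r,s)=(2\ell_i+1)(2r_>^{-1}-U_{\ell_i\ell_i}(r,s))$ is introduced, and your splitting of the one-body sum and the double sum into the $j,k\neq i$, mixed, and $j=k=i$ blocks, together with the identity $(2U_i-K_{\ell_i,i})f_i=Q_{f_i}f_i$ that upgrades $H_{\ell_i}^{(i)}$ to $H_{\ell_i}$ in the $\delta$ and $\lambda\delta^2$ coefficients, is precisely that bookkeeping. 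The bounds on $P_i$ via (U2) and (U3) and the symmetric grouping of the quadratic-in-$h$ cross terms are also handled correctly.
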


\begin{proof}[Proof of Theorem~\ref{thm:existence}]
First, we give a proof of the existence of a minimizer for the relaxed minimization problem, which proceeds the same way as in the paper of Lieb and Simon \cite{LS1977}.
$\EE^{RHF}(g_1,\dots,g_{s_0})$ is bounded below independently of $g_1,\dots,g_{s_0} \in H^1_0(\R_+)$ with $\|g_i\|\leq1$, see Lemma~\ref{lm:technical}.
Thus, let $g_1^{(n)},\dots,g_{s_0}^{(n)}$ be a minimizing sequence for the relaxed minimization problem \eqref{generalized minimization problem}.
Again by Lemma~\ref{lm:technical}, $(g_j^{(n)})_{n \in \N}$ is bounded in $H^1_0(\R_+)$.
Hence, there exist weakly-$H_0^1(\R_+)$ convergent subsequences $g_j^{(n)} \rightharpoonup g_j$ $(n \to \infty)$, $j=1,\dots,s_0$.
Fix $i \in \{1,\dots,s_0\}$. Without loss of generality we may assume that $g_1,\dots,g_{k_i}$ are all functions $g_j$ with $\ell_j=\ell_i$.
The matrix $M:=(\langle g_j, g_k \rangle)_{j,k=1,\dots,k_i}$ is hermitian and obeys $0 \leq M\leq 1$ (c.f.~\cite[Lemma~2.2]{LS1977}), so there exists a unitary $k_i \times k_i$ matrix $U$ with the property $U^* M U = D$, where $D$ is a diagonal matrix with eigenvalues in $[0,1]$.
If we define $f_j = \sum_{k=1}^{k_i} u_{kj}g_k$, $j=1,\dots,k_i$, then $\langle f_j, f_k \rangle = \lambda_{j}\delta_{jk}$, $0 \leq \lambda_j \leq 1$.
It is easy to see that $\EE^{RHF}$ is invariant under such transformations.
Thus, transforming each subspace of functions with equal angular momentum quantum numbers in this way, we obtain functions $f_1,\dots,f_{s_0}$ with $\langle f_i, f_j \rangle =0$, if $\ell_i=\ell_j$, $i \neq j$, $\norm{f_i} \leq 1$ for all $i$.
Furthermore, $f_1,\dots,f_{s_0}$ minimize $\EE^{RHF}$, because
\begin{eqnarray*}
\tilde E(N,Z) & \leq & \EE^{RHF}(f_1,\dots,f_{s_0}) = \EE^{RHF}(g_1,\dots,g_{s_0}) \\
& \leq & \liminf_{n \to \infty} \EE^{RHF}(g_1^{(n)},\dots,g_{s_0}^{(n)}) = \tilde E(N,Z),
\end{eqnarray*}
where we used Lemma~\ref{lm:technical}.
By further transformations we can achieve that $f_1,\dots,f_{s_0}$ are eigenfunctions of the operators $H_{\ell_i}$.

(i) Let $f_i\neq 0$ and assume that $\eps_i > 0$. Then, by \eqref{decomp2} with $\lambda=0$ and $h=f_i$, the energy decreases if we decrease the norm of $f_i$. Let $\eps_i <0$ and assume that $\norm{f_i}<1$. Then, the energy is decreased by increasing the norm of $f_i$.

(ii) We prove the following more general statement: 
Let $0 \leq \mu \leq 1$ and let \linebreak $Z \geq N-1-(1-\mu)(4\ell_i+1)$, then $\mu \leq \|f_i\|^2 \leq 1$.

There is nothing to prove in the case $\mu=0$. Therefore, let $\mu >0$ and assume that $\|f_i\|^2 < \mu$.
We show that there exists $h \in H^1_0(\R_+)$ with $h \perp f_j$, if $\ell_j=\ell_i$, such that 
$$
\EE^{RHF}(f_1,\dots, f_i+\delta h, \dots, f_{s_0}) < \EE^{RHF}(f_1,\dots, f_i, \dots, f_{s_0})
$$
for small $\delta\neq 0$, which contradicts the minimization property of $f_1,\dots,f_{s_0}$.
The dependence of the left-hand side on $h \in H^1_0(\R_+)$ is given by \eqref{decomp2} with $\lambda=0$. The factor of $\delta$ in \eqref{decomp2} vanishes since $f_1,\dots,f_{s_0}$ is a minimizer. Therefore, we only have to show that there exist infinitely many normalized functions $h \in H^1_0(\R_+)$ with disjoint supports, such that 
the factor of $\delta^2$ in \eqref{decomp2}
\begin{equation}
\label{delta2term}
\form{h}{H_{\ell_i}^{(i)}}{h} + \form{f_i \otimes h}{P_i}{f_i \otimes h} + \form{f_i \otimes h}{P_i}{h \otimes f_i} + \operatorname{Re}\form{h \otimes h}{P_i}{f_i \otimes f_i}
\end{equation}
is negative.
We may drop the $\operatorname{Re}$-term because it becomes non-positive upon a suitable choice of the phase of $h$. 
Let $J \in C_0^\infty(\R_+)$, $\operatorname{supp}(J) \subset [1,2]$, $\norm{J} =1$. Furthermore, we define $J_R(r):= R^{-1/2} J(r/R)$ for $R>0$, then $\operatorname{supp}(J_R) \subset [R,2R]$, $\norm{J_R}=1$, $J_R \in C_0^\infty(\R_+)$.
Using $U(r) \leq r^{-1} \sum_{j=1}^{s_0} (2\ell_j+1)$ and $K_\ell \geq 0$ (Lemma~\ref{lm:technical}), we see that 
\begin{eqnarray}
\label{eqn inf}
\form{J_R}{H_{\ell_i}^{(i)}}{J_R} & \leq & \form{J_R}{\displaystyle -\partial_r^2 + \frac{\ell_i(\ell_i+1)}{r^2}- \frac{Z}{r} + \frac{N-2(2\ell_i+1)}{r}}{J_R}.
\end{eqnarray}
This inequality combined with the estimate for $P_i$ in Lemma~\ref{lm:decomp} allows us to estimate \eqref{delta2term} with the choice $h=J_R$
\begin{eqnarray*}
\form{J_R}{H_{\ell_i}^{(i)}}{J_R} & \leq & \frac{1}{R^2} \form{J}{-\partial_r^2 + \frac{\ell_i(\ell_i+1)}{r^2}}{J} - \frac{(4\ell_i +1)\mu}{R} \form{J}{\frac{1}{r}}{J}, \\
\form{f_i \otimes J_R}{P_i}{f_i \otimes J_R} & \leq & \frac{(4\ell_i +1)\norm{f_i}^2}{R} \form{J}{\frac{1}{r}}{J}, \\
\form{f_i \otimes J_R}{P_i}{J_R \otimes f_i} & = & o\left(\frac{1}{R}\right)
\end{eqnarray*}
for $R \to \infty$.
The sum of the three terms on the right-hand side becomes negative for $R$ large enough, because $\norm{f_i}^2 < \mu$, by assumption. This proves (ii).

(iii) It suffices to show that $\eps_j<0$, $j=1,\dots,s_0$, see (i) and (ii). Assume that $\eps_i=0$. 
We show that there exists $h\in H^1_0(\R_+)$, $\norm{h}=1$, $h \perp f_j$, if $\ell_i = \ell_j$, so that
$$
\EE^{RHF}(f_1,\dots,\frac{f_i + \delta h}{\sqrt{1 + \delta^2}}, \dots, f_{s_0}) < \EE^{RHF}(f_1,\dots, f_i, \dots, f_{s_0})
$$
for small $\delta \neq 0$. Again, the dependence on $h$ of the left-hand side is given by \eqref{decomp2} with $\lambda=1$. Since $\eps_i=0$, it suffices to show that the factor of $\delta^2$, which is the same as in \eqref{delta2term}, can be made negative by suitable choices of $h$. This can be done choosing the same scaled functions as in (ii), but now using $Z > N-1$ instead of $\norm{f_i}^2 < \mu$.
\end{proof}

\section{Other Restricted Hartree--Fock Functionals}\label{sec:applications}

Theorem~\ref{thm:existence} can be readily generalized to other restricted Hartree--Fock functionals. In this section we present analogous results for a so-called UHF functional as well as for a spin-restricted Hartree--Fock functional.
In the former case we continue assuming that the electrons are in product states of space and spin but we drop the condition that the spatial wave functions for spin up resp.~spin down electrons are equal in each shell with fixed angular momentum quantum numbers. More precisely, we consider electrons that are in states of the form
\begin{eqnarray}
\label{restr form2 filled}
\ph_{jm\uparrow}({\bf x},\mu) & = & \frac{f_j^\alpha(|{\bf x}|)}{|{\bf x}|} Y_{\ell_j^\alpha m}({\bf x}) \delta_{\mu,+1}, \quad j=1,\dots,s_0^\alpha, \ m = -\ell_j^\alpha,\dots,+\ell_j^\alpha, \\
\label{restr form2 half-filled}
\ph_{jm\downarrow}({\bf x},\mu) & = & \frac{f_j^\beta(|{\bf x}|)}{|{\bf x}|} Y_{\ell_j^\beta m}({\bf x}) \delta_{\mu,-1}, \quad j=1,\dots,s_0^\beta, \ m = -\ell_j^\beta,\dots,+\ell_j^\beta,
\end{eqnarray}
where $s_0^\alpha, s_0^\beta \in \N_0$, $\ell_1^\alpha,\dots,\ell_{s_0^\alpha}^\alpha,\ell_1^\beta,\dots,\ell_{s_0^\beta}^\beta \in \N_0$, and for all $\nu \in \{\alpha,\beta\}$, $i,j \in \{1,\dots,s_0^\nu\}$
$$
f_i^\nu \in H^1_0(\R_+), \qquad \langle f_i^\nu, f_j^\nu \rangle = \delta_{ij}, \text{ if } \ell_i^\nu = \ell_j^\nu.
$$
The corresponding Hartree--Fock functional, which is called \emph{unrestricted Hartree--Fock (UHF)} functional, takes the form
\begin{eqnarray}
\lefteqn{\EE^{UHF}(f_1^\alpha,\dots,f_{s_0^\alpha}^\alpha;f_1^\beta,\dots,f_{s_0^\beta}^\beta)}\nonumber \\ 
& = & \sum_{\nu \in \{\alpha,\beta\}} \sum_{j=1}^{s_0^\nu} (2\ell_j^\nu + 1) \form{f_j^\nu}{\displaystyle -\partial_r^2 + \frac{\ell_j^\nu(\ell_j^\nu+1)}{r^2} - \frac{Z}{r}}{f_j^\nu} \nonumber \\
& & + \frac{1}{2} \sum_{\nu \in \{\alpha,\beta\}} \sum_{j,k=1}^{s_0^\nu} \left( D[f_j^\nu,f_k^\nu]- E[f_j^\nu,f_k^\nu]\right) + \sum_{j=1}^{s_0^\alpha}\sum_{k=1}^{s_0^\beta} D[f_j^\alpha,f_k^\beta]. \label{uhf energy}
\end{eqnarray}
Here we use the shorthand notations 
\begin{eqnarray*}
D[f_j^\nu,f_k^\mu] & := & (2\ell_j^\nu+1)(2\ell_k^\mu+1)\form{f_j^\nu \otimes f_k^\mu}{\frac{1}{r_>}}{f_j^\nu \otimes f_k^\mu}, \\
E[f_j^\nu,f_k^\mu] & := & (2\ell_j^\nu+1)(2\ell_k^\mu+1)\form{f_j^\nu \otimes f_k^\mu}{U_{\ell_j^\nu \ell_k^\mu}}{f_k^\mu \otimes f_j^\nu}.
\end{eqnarray*}
Given $\nu\in\{\alpha,\beta\}$ and $\ell \in \N_0$ we introduce Fock operators
\begin{eqnarray*}
H_\ell^\nu & := & -\partial_r^2 + \ell(\ell+1)r^{-2} - Zr^{-1} + U - K^\nu_\ell, \quad \text{where} \\
(Uf)(r) & = & \sum_{\kappa\in\{\alpha,\beta\}}\sum_{j=1}^{s_0^\kappa} (2\ell_j^\kappa + 1) \int_{\R_+} \frac{|f_j^\kappa(s)|^2}{\max\{r,s\}}\,ds f(r), \\
(K_\ell^\nu f)(r) & = & \sum_{j=1}^{s_0^\nu} (2\ell_j^\nu + 1)f_j^\nu(r) \int_{\R_+} \overline{f_j^\nu(s)} U_{\ell\ell_j^\nu}(r,s) f(s) \,ds
\end{eqnarray*}
for $f\in L^2(\R_+)$. Again these operators depend on the functions $f_1^\alpha,\dots,f_{s_0^\beta}^\beta$.
Using the same methods as in the proof of Theorem~\ref{thm:existence}, the following existence theorem can be proved:
\begin{theorem}
\label{thm:existence general}
Let $s_0^\alpha, s_0^\beta \in \N_0$, $\ell_1^\alpha,\dots,\ell_{s_0^\alpha}^\alpha,\ell_1^\beta,\dots,\ell_{s_0^\beta}^\beta \in \N_0$, and $Z>0$. Then, there exist functions 
$f_1^\alpha,\dots,f_{s_0^\alpha}^\alpha,f_1^\beta,\dots,f_{s_0^\beta}^\beta \in H^1_0(\R_+)$, which minimize the UHF functional \eqref{uhf energy} under the constraints: for all $\nu\in\{\alpha,\beta\}$ and $i,j \in \{1,\dots,s_0^\nu\}$
\begin{eqnarray*}
\langle f_i^\nu, f_j^\nu \rangle & = & 0 \text{ if } \ell_i^\nu = \ell_j^\nu, \ i \neq j, \\
\norm{f_i^\nu} & \leq & 1 \text{ for all } i.
\end{eqnarray*}
Moreover, $f_i^\nu \in D(H_{\ell_i^\nu}^\nu)$, $H_{\ell_{i}^\nu}^\nu f_i^\nu = \eps^\nu_i f_i^\nu$.
\begin{itemize}
\item[(i)] Either $\eps_i^\nu \leq 0$ or $f_i^\nu =0$. $\eps_i^\nu <0$ implies $\|f_i^\nu\| =1$.
\item[(ii)] 
If $Z > N-(2\ell_i^\nu+1)$, then $f_i^\nu \neq 0$. \\ If $Z\geq N-1$ and $\ell_i^\nu \neq 0$, then $\| f_i^\nu \| =1$.
\item[(iii)] If $Z>N-1$, then $\eps_i^\nu <0$ and $\norm{f_i^\nu}=1$ for all $\nu \in \{\alpha,\beta\}, \ i=1,\dots,s_0^\nu$.
\end{itemize}
\end{theorem}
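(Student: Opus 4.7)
The plan is to mimic the proof of Theorem~\ref{thm:existence} with two spin channels. A UHF analog of Lemma~\ref{lm:technical} yields boundedness from below of $\EE^{UHF}$ and $H_0^1$-boundedness of minimizing sequences (the new cross-spin term $D[f_j^\alpha,f_k^\beta]$ is non-negative, so it does not disturb the lower bound). Extract weakly $H_0^1$-convergent subsequences $f_i^{\nu,(n)}\rightharpoonup f_i^\nu$, and within each $(\nu,\ell)$-subspace (functions $f_j^\nu$ of fixed spin $\nu$ and fixed $\ell_j^\nu=\ell$) diagonalize the Gram matrix by a unitary transformation as in \cite{LS1977}. The functional $\EE^{UHF}$ is invariant under this mixing: the exchange operator $K^\nu_\ell$ acts only within spin $\nu$, while the direct potential $U$ and the $\alpha$--$\beta$ cross term only depend on the total density $\sum_{\kappa,j}(2\ell_j^\kappa+1)|f_j^\kappa|^2$. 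Weak lower semicontinuity then produces a minimizer of the relaxed problem, and a further unitary mixing diagonalizes the Euler--Lagrange equations to $H_{\ell_i^\nu}^\nu f_i^\nu=\eps_i^\nu f_i^\nu$.

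For parts (i)--(iii) I would establish a UHF analog of Lemma~\ref{lm:decomp}. The $j=k=i$ diagonal term of the same-spin sum in \eqref{uhf energy} (with its prefactor $\frac{1}{2}$) equals
\[
\tfrac{1}{2}(2\ell_i^\nu+1)\form{f_i^\nu\otimes f_i^\nu}{P_i^\nu}{f_i^\nu\otimes f_i^\nu},\qquad
P_i^\nu(r,s):=(2\ell_i^\nu+1)\!\left(\tfrac{1}{\max\{r,s\}}-U_{\ell_i^\nu\ell_i^\nu}(r,s)\right),
\]
and Lemma~\ref{lm:U-ell} (whose lower bound $U_{\ell\ell}\geq 1/((2\ell+1)\max\{r,s\})$ captures just the $L=0$ multipole) gives $0\leq P_i^\nu(r,s)\leq 2\ell_i^\nu/\max\{r,s\}$. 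The norm-preserving perturbation $f_i^\nu\mapsto(f_i^\nu+\delta h)/\sqrt{1+\lambda\delta^2}$ with $h\perp f_j^\nu$ whenever $\ell_j^\nu=\ell_i^\nu$ admits a Taylor expansion with the same structure as \eqref{decomp2}: the outer factor $2(2\ell_i+1)$ is replaced by $(2\ell_i^\nu+1)$, the operator $H^{(i)}_{\ell_i^\nu}$ is replaced by the UHF operator $H^{\nu,(i)}_{\ell_i^\nu}$ (the $i$-th spin-$\nu$ shell removed from both $U$ and $K^\nu_{\ell_i^\nu}$), and $P_i$ is replaced by $P_i^\nu$.

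Parts (i)--(iii) then follow as in Theorem~\ref{thm:existence}. For (i), take $h=f_i^\nu$ and $\lambda=0$. For (ii)--(iii), take $h=J_R$, a normalized radial bump in $C_0^\infty(\R_+)$ supported in $[R,2R]$ and orthogonal to the other $f_j^\nu$ of angular momentum $\ell_i^\nu$. Using $U\leq r^{-1}\sum_{\kappa,j}(2\ell_j^\kappa+1)$, $K^\nu_\ell\geq 0$, the asymptotic $P_i^\nu(r,s)\max\{r,s\}\to 2\ell_i^\nu$ for $s\gg r$, and the decay $\form{f_i^\nu\otimes J_R}{P_i^\nu}{J_R\otimes f_i^\nu}=o(1/R)$ of the exchange-like cross-term (disjoint supports), the coefficient of $\delta^2/R$ in the energy increment is bounded above by
\[
\bigl(N-(2\ell_i^\nu+1)-Z+2\ell_i^\nu\|f_i^\nu\|^2\bigr)\form{J}{\tfrac{1}{r}}{J}
=\bigl(2\ell_i^\nu(\|f_i^\nu\|^2-1)+(N-1-Z)\bigr)\form{J}{\tfrac{1}{r}}{J}.
\]
Under $Z\geq N-1$ and the assumption-by-contradiction $\|f_i^\nu\|<1$ this is strictly negative \emph{provided} $\ell_i^\nu\neq 0$, yielding the second half of (ii); the case $Z>N-(2\ell_i^\nu+1)$ of (ii) corresponds to testing with $\|f_i^\nu\|^2=0$, and for (iii) one assumes $\eps_i^\nu=0$ and reruns the argument with $\lambda=1$, where $Z>N-1$ suffices irrespective of $\ell_i^\nu$.

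The main new obstacle relative to RHF is the sharpness of the bound $P_i^\nu\leq 2\ell_i^\nu/\max\{r,s\}$. Unlike $P_i=(2\ell_i+1)(2/\max\{r,s\}-U_{\ell_i\ell_i})$, which carries two factors of $(2\ell_i+1)$ and benefits from the spin-pairing factor $2$, the UHF kernel $P_i^\nu$ has only one such factor, so the self-exchange cancels exactly the leading $L=0$ multipole of the direct self-interaction and leaves only the higher-multipole remainder, which vanishes identically for $\ell_i^\nu=0$. This is precisely why the critical case $Z=N-1$ in UHF requires the angular-momentum hypothesis $\ell_i^\nu\neq 0$ in (ii).
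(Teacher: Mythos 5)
Your proposal is correct and follows exactly the route the paper intends: the paper gives no explicit proof of Theorem~\ref{thm:existence general}, stating only that it follows ``using the same methods as in the proof of Theorem~\ref{thm:existence}'', and your adaptation supplies precisely those details. In particular, your identification of the weakened kernel bound $0\leq P_i^\nu(r,s)\leq 2\ell_i^\nu/\max\{r,s\}$ (with $P_i^\nu\equiv 0$ for $\ell_i^\nu=0$) as the reason the critical case $Z=N-1$ in part (ii) requires $\ell_i^\nu\neq 0$ is exactly the point where the UHF statement departs from the RHF one, and your coefficient computation $\bigl(N-(2\ell_i^\nu+1)-Z+2\ell_i^\nu\|f_i^\nu\|^2\bigr)\form{J}{\tfrac{1}{r}}{J}$ is consistent with the paper's scheme.
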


\noindent\emph{Remarks}.
\begin{itemize}
\item[(a)] We do not know, except for the case where $\ell=0$, whether the occupied eigenvalues of the corresponding Fock operator are the lowest eigenvalues or whether there is a gap between occupied and unoccupied eigenvalues.
\item[(b)] In general, Theorem~\ref{thm:existence general} does not imply the existence of UHF minimizers in the case of $Z=N-1$. Nevertheless, in the special case where all spins point in the same direction (i.e.~the spinless case) the following existence result holds true.
\end{itemize}

\begin{corollary}
\label{existence of negative ions}
Let $s_0^\alpha \in \N$, $s_0^\beta=0$, and let $\ell_1^\alpha=0$, $\ell_2^\alpha,\dots,\ell_{s_0^\alpha}^\alpha>0$ with
$$
s_0^\alpha < 2 + \sum_{i=2}^{s_0^\alpha} \left( \frac{\ell_i^\alpha}{\ell_i^\alpha+1} \right)^2.
$$ 
If $Z=\sum_{i=2}^{s_0^\alpha} (2\ell_i^\alpha+1)$ and $N= Z+1$, then the UHF functional \eqref{uhf energy} has a minimizer under the constraints $\langle f_i^\alpha, f^\alpha_j \rangle = \delta_{ij}$ for all $i,j = 1,\dots, s_0^\alpha$ with $\ell_i=\ell_j$.
\end{corollary}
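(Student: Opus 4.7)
My plan is to apply Theorem~\ref{thm:existence general} and then to use the vanishing of the spinless $s$-shell self-interaction to reduce the claim to the existence of a single negative bound state. Invoking Theorem~\ref{thm:existence general} yields a minimizer $(f_1^\alpha,\dots,f_{s_0^\alpha}^\alpha)$ of the relaxed UHF problem in which the orthogonality constraint and $\|f_i^\alpha\|\leq 1$ are automatic; because $Z=N-1$ and $\ell_i^\alpha>0$ for $i\geq 2$, part~(ii) of that theorem forces $\|f_i^\alpha\|=1$ for every $i\geq 2$. Since the $s$-shell is the unique shell with $\ell=0$, the entire claim reduces to the equality $\|f_1^\alpha\|=1$.

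The decisive structural observation is that in the spinless UHF setting with $\ell_1^\alpha=0$ the analogue of the kernel $P_1$ of Lemma~\ref{lm:decomp} is $(2\ell_1^\alpha+1)(r_>^{-1}-U_{\ell_1^\alpha\ell_1^\alpha}(r,s))$, which vanishes identically since $U_{00}(r,s)=r_>^{-1}$ and the spinless functional lacks the factor of~$2$ present in the RHF case. The UHF version of the decomposition in Lemma~\ref{lm:decomp} thus collapses to
$$
\EE^{UHF}(f_1^\alpha,f_2^\alpha,\dots,f_{s_0^\alpha}^\alpha)=\EE^{UHF}(\hat f_1^\alpha,f_2^\alpha,\dots,f_{s_0^\alpha}^\alpha)+\form{f_1^\alpha}{H_0^{\alpha,(1)}}{f_1^\alpha},
$$
in which the Fock operator $H_0^{\alpha,(1)}$ depends only on $f_2^\alpha,\dots,f_{s_0^\alpha}^\alpha$. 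Optimising over $f_1^\alpha$ subject to $\|f_1^\alpha\|\leq 1$ is therefore linear in the rank-one operator $|f_1^\alpha\rangle\langle f_1^\alpha|$ and forces the dichotomy: either $f_1^\alpha=0$ (when $H_0^{\alpha,(1)}$ has no negative eigenvalue on the $s$-wave sector) or $f_1^\alpha$ is a normalised ground-state eigenfunction of $H_0^{\alpha,(1)}$ with $\|f_1^\alpha\|=1$.

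I then assume for contradiction that $f_1^\alpha=0$ and construct a trial function $h\in H^1_0(\R_+)$ with $\form{h}{H_0^{\alpha,(1)}}{h}<0$. Because $\|f_j^\alpha\|=1$ for $j\geq 2$, Newton's theorem delivers $U^{(1)}(r)-Z/r\leq 0$ pointwise, strictly so on the supports of the outer orbitals, while $K_0^{\alpha,(1)}\geq 0$ adds further attraction, and the essential spectrum of $H_0^{\alpha,(1)}$ is $[0,\infty)$, so only a single negative bound state is required. The $J_R$-trial used in the proof of Theorem~\ref{thm:existence} is no longer available, because in the spinless case the long-range Coulomb tail of $U^{(1)}-Z/r$ vanishes, so $h$ must be localised on the scale of the outer shells themselves. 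Rewriting the hypothesis in the equivalent form
$$
\sum_{i=2}^{s_0^\alpha}\frac{2\ell_i^\alpha+1}{(\ell_i^\alpha+1)^2}<1
$$
suggests building $h$ from the outer orbitals via iterated angular-momentum-lowering intertwiners $A_\ell=\partial_r+\ell/r$, which satisfy $A_\ell^*A_\ell=-\partial_r^2+\ell(\ell+1)/r^{2}$ and $A_\ell A_\ell^*=-\partial_r^2+(\ell-1)\ell/r^{2}$ and therefore map any $\ell_j^\alpha$-partial wave into the $s$-sector. The main obstacle is carrying out this construction explicitly: evaluating $\form{h}{H_0^{\alpha,(1)}}{h}$ by commuting the intertwiners past $H_0^{\alpha,(1)}$, invoking the Hartree--Fock equations for the outer shells and the kernel bounds of Lemma~\ref{lm:U-ell}, and matching combinatorial weights so that the leading contribution reproduces the strictly negative quantity $\sum(2\ell_i^\alpha+1)/(\ell_i^\alpha+1)^2-1$, is the technical core of the corollary.
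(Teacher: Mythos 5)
Your reduction is correct up to the decisive step: invoking Theorem~\ref{thm:existence general}~(ii) to get $\|f_i^\alpha\|=1$ for $i\geq 2$, and observing that for the spinless $\ell=0$ shell the self-interaction cancels exactly (since $U_{00}(r,s)=\max\{r,s\}^{-1}$), so that $\EE^{UHF}(f_1^\alpha,\dots,f_{s_0^\alpha}^\alpha)=\EE^{UHF}(0,f_2^\alpha,\dots,f_{s_0^\alpha}^\alpha)+\form{f_1^\alpha}{H_0^{\alpha}}{f_1^\alpha}$, both match the paper. But the core of the corollary --- showing that $\form{f_1^\alpha}{H_0^{\alpha}}{f_1^\alpha}<0$ --- is left unproved. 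Your proposed route is to exhibit a negative-energy trial state for the $s$-wave Fock operator with the outer shells as background. This is a genuinely delicate problem: as you note, the screened potential $U^{(1)}(r)-Z/r$ has no long-range Coulomb tail (the $r^{-1}$ contributions cancel since $\sum_{j\geq 2}(2\ell_j^\alpha+1)=Z$), and a non-positive, rapidly decaying potential on the half-line with Dirichlet boundary condition need not support a bound state, so some quantitative input is indispensable. The intertwiner construction you sketch is not carried out, and there is no indication of how commuting $A_\ell=\partial_r+\ell/r$ past the nonlocal operators $U$ and $K_0^\alpha$ would produce precisely the combination $\sum_{i\geq 2}(2\ell_i^\alpha+1)/(\ell_i^\alpha+1)^2-1$; as written this is a conjecture, not a proof.

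The paper closes the gap by a global energy comparison rather than a local spectral argument. On one side, the configuration $(g,0,\dots,0)$ with a single $1s$ electron is admissible for the relaxed problem and its self-interaction cancels, so $\EE^{UHF}(f_1^\alpha,\dots,f_{s_0^\alpha}^\alpha)\leq\inf_{\|g\|\leq 1}\form{g}{-\partial_r^2-Z/r}{g}=-Z^2/4$. On the other side, dropping the (positive) electron--electron energy and bounding each outer channel by the hydrogenic ground state energy $-Z^2/(4(\ell_i^\alpha+1)^2)$ gives
$$
\EE^{UHF}(0,f_2^\alpha,\dots,f_{s_0^\alpha}^\alpha)\;\geq\;-\frac{Z^2}{4}\sum_{i=2}^{s_0^\alpha}\frac{2\ell_i^\alpha+1}{(\ell_i^\alpha+1)^2}\;=\;-\frac{Z^2}{4}\left(s_0^\alpha-1-\sum_{i=2}^{s_0^\alpha}\Bigl(\frac{\ell_i^\alpha}{\ell_i^\alpha+1}\Bigr)^2\right)\;>\;-\frac{Z^2}{4},
$$
where the strict inequality is exactly the hypothesis on $s_0^\alpha$. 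Hence the two sides of your decomposition differ, so $\form{f_1^\alpha}{H_0^\alpha}{f_1^\alpha}=\eps_1^\alpha\|f_1^\alpha\|^2<0$, whence $\eps_1^\alpha<0$ and $\|f_1^\alpha\|=1$ by part~(i). This is where the hypothesis enters --- as an energy bookkeeping condition, not as input to a trial-function construction --- and it is the step your proposal is missing.
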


\noindent\emph{Remark.} The condition of Corollary~\ref{existence of negative ions} always holds in the case of two shells $s_0^\alpha=2$, $\ell_1^\alpha=0$, $\ell_2^\alpha>0$.

\begin{proof}[Proof of Corollary~\ref{existence of negative ions}]
Theorem~\ref{thm:existence general} yields the existence of $f_1^\alpha,\dots, f_{s_0^\alpha}^\alpha \in H^1_0(\R_+)$, which minimize \eqref{uhf energy} under the constraints $\langle f_i^\alpha, f_j^\alpha \rangle = 0$ if $\ell_i^\alpha = \ell_j^\alpha$ and $i \neq j$, $\| f_i^\alpha \| \leq 1$ for all $i$. Clearly, $\| f_2^\alpha \|=\dots=\| f_{s_0^\alpha}^\alpha\|=1$ by (ii). Observe that
\begin{equation}\label{cor:eqn1}
\EE^{UHF}(f_1^\alpha,\dots,f_{s_0^\alpha}^\alpha) \leq \inf_{\begin{array}{c} \scriptstyle g \in H^1_0(\R_+), \\ \scriptstyle \|g\| \leq 1 \end{array}} \EE^{UHF}(g,0,\dots,0) = - \frac{Z^2}{4},
\end{equation}
and on the other hand
\begin{eqnarray}
\EE^{UHF}(0,f_2^\alpha,\dots,f_{s_0^\alpha}^\alpha) & \geq & - \frac{Z^2}{4} \sum_{i=2}^{s_0^\alpha} \frac{2\ell_i^\alpha+1}{(\ell_i^\alpha+1)^2} = -\frac{Z^2}{4} \left( s_0^\alpha-1 - \sum_{i=2}^{s_0^\alpha} \left(\frac{\ell_i^\alpha}{\ell_i^\alpha+1}\right)^2 \right) \nonumber \\
& > & -\frac{Z^2}{4}, \label{cor:eqn2}
\end{eqnarray}
where we dropped the electron--electron energy and estimated the remaining terms by the hydrogen ground state energies in the first inequality, and used the condition on $s_0^\alpha$ in the second inequality.
Assume, that $\langle f_1^\alpha |H_0^\alpha | f_1^\alpha\rangle = 0$, then $$ \EE^{UHF}(f_1^\alpha,\dots,f_{s_0^\alpha}^\alpha)=\EE^{UHF}(0,f_2^\alpha,\dots,f_{s_0^\alpha}^\alpha),$$
because $\EE^{UHF}(f_1^\alpha,\dots,f_{s_0^\alpha}^\alpha) = \EE^{UHF}(0,f_2^\alpha,\dots,f_{s_0^\alpha}^\alpha) + \langle f_1^\alpha|H_0^\alpha |f_1^\alpha \rangle$, which contradicts \eqref{cor:eqn1} and \eqref{cor:eqn2}.
Therefore, $\langle f_1^\alpha | H_0^\alpha | f_1^\alpha \rangle = \eps_1^\alpha \| f_1^\alpha \|^2 <0$, which implies $\eps_1^\alpha < 0$ and thus $\norm{f_1^\alpha}=1$.
\end{proof}

Another frequently used model for atoms with an even number of electrons is the \emph{spin-restricted Hartree--Fock (SRHF)} model \cite{Cances2003}. It emerges from the RHF model in Section~\ref{sec:closed shell} by dropping the prescribed angular momentum quantum numbers.
More precisely, for an atom with atomic number $Z$ and $N=2n$ we impose the following form on the one-electron orbitals
$$
\ph_{i\sigma}({\bf x},\mu) = \ph_i({\bf x})\delta_{\sigma\mu}, \ i=1,\dots,n, \ \sigma = \pm 1,
$$
where $\ph_i \in H^1(\R^3)$ and $\langle \ph_i, \ph_j \rangle := \int_{\R^3} \overline{\ph_i}\ph_j \,d{\bf x} = \delta_{ij}$. Then the restricted Hartree--Fock functional reads
\begin{eqnarray}
\lefteqn{\EE^{SRHF}(\ph_1,\dots,\ph_n) = 2 \sum_{i=1}^n \int |\nabla \ph_i({\bf x})|^2 - \frac{Z}{|{\bf x}|}|\ph_i({\bf x})|^2  \,d{\bf x}} \nonumber \\
&  & \qquad\qquad\qquad\quad  + \frac{1}{2} \int \!\!\!\int 4 \frac{\rho({\bf x})\rho({\bf y})}{|{\bf x}-{\bf y}|} - 2 \frac{|\tau({\bf x},{\bf y})|^2}{|{\bf x}-{\bf y}|} \,d{\bf x}\,d{\bf y}. \label{srhf energy}
\end{eqnarray}
Here the electronic density matrix and the electronic density are given by
$$
\tau({\bf x},{\bf y}) = \sum_{i=1}^n \ph_i({\bf x})\overline{\ph_i({\bf y})}, \quad \rho({\bf x}) = \sum_{i=1}^n |\ph_i({\bf x})|^2.
$$
The corresponding Fock operator is given by
$$
H = -\Delta - \frac{Z}{|{\bf x}|} + 2 \int \frac{\rho({\bf y})}{|{\bf x}-{\bf y}|} \,dy - K,
$$
where $(K\ph)({\bf x}) :=  \int \frac{\tau({\bf x},{\bf y})\ph({\bf y})}{|{\bf x}-{\bf y}|} \,d{\bf y}$. 
Using similar ideas as in the proof of Theorem~\ref{thm:existence} the following existence theorem holds true for the spin-restricted Hartree--Fock functional:

\begin{theorem}
\label{thm:existence spin}
Let $Z>0$ and $N=2n$. Then, there exist functions $\ph_1,\dots,\ph_n \in H^1(\R^3)$, which minimize the SRHF functional \eqref{srhf energy} under the constraints
\begin{eqnarray*}
\sprod{\ph_i}{\ph_j} & = & 0 \text{ if } i \neq j, \\
\norm{\ph_i} & \leq & 1 \text{ for all } i.
\end{eqnarray*}
Moreover, $\ph_i \in D(H)=H^2(\R^3)$, $H \ph_i = \eps_i \ph_i$, and:
\begin{itemize}
\item[(i)] Either $\eps_i \leq 0$ or $\ph_i=0$. $\eps_i < 0$ implies $\norm{\ph_i} = 1$.
\item[(ii)] 
If $Z>N-2$, then $\ph_i \neq 0$ for all $i=1,\dots,n$. \\ If $Z\geq N-1$, then $\| \ph_i \|=1$ for all $i=1\dots,n$.
\item[(iii)] If $Z>N-1$, then $\eps_i<0$ and $\norm{\ph_i}=1$ for all $i=1,\dots,n$.
\end{itemize}
\end{theorem}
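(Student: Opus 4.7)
The plan is to mirror the proof of Theorem~\ref{thm:existence}, replacing the RHF decomposition Lemma~\ref{lm:decomp} by its spin-restricted analog. I first establish existence for the relaxed problem: using Hardy's inequality to absorb $-Z/|\mathbf{x}|$ into the kinetic energy and the estimate $\int\!\!\int|\tau|^2/|\mathbf{x}-\mathbf{y}| \leq \int\!\!\int\rho(\mathbf{x})\rho(\mathbf{y})/|\mathbf{x}-\mathbf{y}|$ to bound the exchange term, one checks that $\EE^{SRHF}$ is bounded below on $\{(\ph_1,\dots,\ph_n):\|\ph_i\|\leq 1\}$ and that any minimizing sequence is bounded in $H^1(\R^3)$. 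Extract weak limits $\ph_i^{(n)}\rightharpoonup\ph_i$ and diagonalize the Gram matrix $(\sprod{\ph_i}{\ph_j})_{i,j}$ exactly as in the RHF proof, exploiting the unitary invariance of $\EE^{SRHF}$ to obtain candidates with $\sprod{\ph_i}{\ph_j}=0$ for $i\neq j$ and $\|\ph_i\|\leq 1$. Weak lower semicontinuity of each term in \eqref{srhf energy} follows from local $H^1$-compactness together with the Hardy--Littlewood--Sobolev inequality, exactly as in \cite{LS1977}. A second unitary transformation diagonalizes the Fock operator $H$; elliptic regularity gives $\ph_i\in D(H)=H^2(\R^3)$ and $H\ph_i=\eps_i\ph_i$. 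Part~(i) is then identical to the RHF case: the scaling $\ph_i\to t\ph_i$ would decrease the energy unless $\eps_i\leq 0$ with $\|\ph_i\|=1$ whenever $\eps_i<0$.

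The key new input for (ii) and (iii) is the SRHF analog of Lemma~\ref{lm:decomp}, namely
\[
\EE^{SRHF}(\ph_1,\dots,\ph_n) = \EE^{SRHF}(\ph_1,\dots,\hat\ph_i,\dots,\ph_n) + 2\sprod{\ph_i}{H^{(i)}\ph_i} + \int\!\!\!\int\frac{|\ph_i(\mathbf{x})|^2|\ph_i(\mathbf{y})|^2}{|\mathbf{x}-\mathbf{y}|}\,d\mathbf{x}\,d\mathbf{y},
\]
obtained by a direct computation, together with the corresponding second-order expansion of $\EE^{SRHF}(\ldots,(\ph_i+\delta h)/\sqrt{1+\lambda\delta^2},\ldots)$. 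The role of the RHF kernel $P_i$ is played by $P(\mathbf{x},\mathbf{y})=1/|\mathbf{x}-\mathbf{y}|$. As a test direction I would take a spherically symmetric bump $h=J_R$ with $J_R(\mathbf{x})=R^{-3/2}J(\mathbf{x}/R)$, $J\in C_0^\infty(\R^3)$ radial with support in $\{1\leq|\mathbf{x}|\leq 2\}$ and $\|J\|=1$. The decisive estimate is the Newton's-theorem bound
\[
\int\frac{|J_R(\mathbf{x})|^2}{|\mathbf{x}-\mathbf{y}|}\,d\mathbf{x} \leq \frac{c_1}{R} \quad\text{uniformly in }\mathbf{y}, \qquad c_1 := \int\frac{|J(\mathbf{x})|^2}{|\mathbf{x}|}\,d\mathbf{x},
\]
obtained by recognising the left-hand side as the Coulomb potential of the radial mass distribution $|J_R|^2$, which is constant $=c_1/R$ for $|\mathbf{y}|<R$ and monotonically decreasing outside the shell. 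This yields $\sprod{J_R}{2U^{(i)}J_R}\leq c_1(N-2)/R$ (using $\int\rho^{(i)}=n-1=(N-2)/2$) and $\int\!\!\int|\ph_i(\mathbf{x})|^2|J_R(\mathbf{y})|^2/|\mathbf{x}-\mathbf{y}|\,d\mathbf{x}\,d\mathbf{y}\leq c_1\|\ph_i\|^2/R$; the kinetic contribution is $O(1/R^2)$, the external potential gives $-Zc_1/R$, and $\sprod{J_R}{-K^{(i)}J_R}\leq 0$. The remaining cross couplings are $o(1/R)$ by a Hardy--Littlewood--Sobolev estimate using $\|\ph_i\chi_{\{|\mathbf{x}|>R\}}\|_2\to 0$.

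Adding these contributions, the leading $1/R$ coefficient of the $\delta^2$-term in the expansion equals $c_1\bigl[(N-2-Z)+\|\ph_i\|^2\bigr]$. Exactly as in the proof of Theorem~\ref{thm:existence}, the simultaneous assumption $\|\ph_i\|^2<\mu$ and $Z\geq N-2+\mu$ makes this coefficient strictly negative, contradicting the minimization property. The two assertions of (ii) correspond to $\mu\to 0^+$ (giving $Z>N-2\Rightarrow\ph_i\neq 0$) and $\mu=1$ (giving $Z\geq N-1\Rightarrow\|\ph_i\|=1$). Part~(iii) uses the $\lambda=1$ variant of the expansion: the additional term $-\lambda\sprod{\ph_i}{H\ph_i}$ vanishes because $\eps_i=0$ forces $H\ph_i=0$, and the same $1/R$ balance is made strictly negative by $Z>N-1$. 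Orthogonality of $h$ to the other $\ph_j$ is restored by subtracting the projection of $J_R$ onto the finite-dimensional span of the $\ph_j$; the correction has vanishing norm as $R\to\infty$ and does not affect the leading-order analysis. I expect the chief technical obstacle to be the uniform control of these orthogonalisation corrections and of the cross Coulomb integrals between the non-radial orbitals $\ph_j$ and the radial bump $J_R$: unlike the one-dimensional radial reduction available in the RHF setting of Theorem~\ref{thm:existence}, here one must work with genuine three-dimensional Coulomb integrals and rely on the spherical symmetry of $J_R$ alone to close the $1/R$ estimate.
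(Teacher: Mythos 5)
Your proposal is correct and follows precisely the route the paper intends: the paper offers no separate proof of Theorem~\ref{thm:existence spin}, saying only that it uses ``similar ideas'' as Theorem~\ref{thm:existence}, and your SRHF decomposition with interaction kernel $P(\mathbf{x},\mathbf{y})=|\mathbf{x}-\mathbf{y}|^{-1}$, the radial annular bump $J_R$, the Newton's-theorem bound, and the resulting leading coefficient $c_1\bigl[(N-2-Z)+\|\ph_i\|^2\bigr]/R$ are exactly the right transcription of the RHF argument, reproducing the thresholds $Z>N-2$ and $Z\geq N-1$ in (ii). The one place you deviate is the orthogonalization of $h$ against the other orbitals (projection subtraction rather than the paper's family of disjointly supported bumps); this is a minor technical variant, though as you note it requires checking that the projection corrections, controlled by the tails $\|\ph_j\chi_{\{|\mathbf{x}|>R\}}\|_2$, do not contribute at order $1/R$.
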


\noindent\emph{Remark.} For this spin-restricted Hartree--Fock functional the minimizer exists for all $Z \geq N-1$. Again we do not know whether or not $\eps_j$ are the $n$ lowest eigenvalues of $H$, although there seem to be no numerical counterexamples \cite{Cances2003}.

\section{Derivation of the Closed Shell Energy Functional}\label{app}

For the reader's convenience we give here a self-contained derivation of the restricted Hartree--Fock functional \eqref{rest energy}. For this purpose, we begin with a lemma that will be useful for the calculation of the electron--electron interaction energy.

Let $P_\ell$ denote the $\ell$-th Legendre polynomial. We remark that for ${\bf \hat x},{\bf \hat y} \in \mathbb{S}^2$ and $\ell \in \N_0$ the addition theorem 
\begin{equation}\label{LegProp}
\sum_{m=-\ell}^\ell Y_{\ell m}({\bf \hat x}) \overline{Y_{\ell m}({\bf \hat y})} = \frac{2\ell +1}{4\pi} P_\ell({\bf \hat x} \cdot {\bf \hat y})
\end{equation}
holds, where ${\bf \hat x} \cdot {\bf \hat y}$ is the usual scalar product of two vectors in $\R^3$.
\begin{proposition}
\label{prop:representation formula}
Let $\ell, L \in \N_0$ and $M \in \Z$, $|M| \leq L$. Then for all $r,s>0$ and ${\bf \hat x} \in \mathbb{S}^2$:
\begin{equation}
\label{eqn general}
\frac{1}{4\pi}\int_{\mathbb{S}^2} \frac{P_\ell({\bf \hat x} \cdot {\bf \hat y})Y_{LM}({\bf \hat y})}{|r{\bf \hat x}-s{\bf \hat y}|} \,d\sigma({\bf \hat y}) = 
Y_{LM}({\bf \hat x}) \sum_{n=|L-\ell|}^{L+\ell} \begin{pmatrix} L & \ell & n \\ 0 & 0 & 0 \end{pmatrix}^2 \frac{\min\{r,s\}^n}{\max\{r,s\}^{n+1}}.
\end{equation}
\end{proposition}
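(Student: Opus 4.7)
\medskip

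\noindent\emph{Proof plan.} The starting point is the observation that, for fixed ${\bf \hat x} \in \mathbb{S}^2$, the factor $P_\ell({\bf \hat x}\cdot{\bf \hat y})/|r{\bf \hat x} - s{\bf \hat y}|$ depends on ${\bf \hat y}$ only through the inner product ${\bf \hat x}\cdot{\bf \hat y}$, because $|r{\bf \hat x} - s{\bf \hat y}|^2 = r^2 - 2rs\,{\bf \hat x}\cdot{\bf \hat y} + s^2$. I would therefore apply the Funk--Hecke formula, which states that for every $F \in L^1([-1,1])$,
\begin{equation*}
\int_{\mathbb{S}^2} F({\bf \hat x}\cdot{\bf \hat y})\, Y_{LM}({\bf \hat y})\, d\sigma({\bf \hat y}) = 2\pi\, Y_{LM}({\bf \hat x}) \int_{-1}^1 F(t)\, P_L(t)\, dt.
\end{equation*}
Specialising to $F(t) = P_\ell(t)/\sqrt{r^2 - 2rst + s^2}$, this collapses the left-hand side of \eqref{eqn general} to
\begin{equation*}
\tfrac{1}{2}\, Y_{LM}({\bf \hat x}) \int_{-1}^1 \frac{P_\ell(t)\, P_L(t)}{\sqrt{r^2 - 2rst + s^2}}\, dt,
\end{equation*}
which already accounts for the $Y_{LM}({\bf \hat x})$ prefactor claimed on the right-hand side.

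Second, I would insert the standard generating function
\begin{equation*}
\frac{1}{\sqrt{r^2 - 2rst + s^2}} = \sum_{n=0}^\infty \frac{r_<^n}{r_>^{n+1}}\, P_n(t),
\end{equation*}
whose convergence is uniform in $t \in [-1,1]$ whenever $r \neq s$, so that one may integrate term by term. This reduces the statement to evaluating the triple Legendre integrals $\int_{-1}^1 P_\ell(t) P_L(t) P_n(t)\, dt$.

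The final step is the classical identity
\begin{equation*}
\int_{-1}^1 P_\ell(t)\, P_L(t)\, P_n(t)\, dt = 2 \begin{pmatrix} L & \ell & n \\ 0 & 0 & 0 \end{pmatrix}^2,
\end{equation*}
which follows by rewriting each Legendre polynomial via the addition theorem \eqref{LegProp} and invoking the Wigner 3j-symbol evaluation of the integral of a product of three spherical harmonics. Since this 3j-symbol vanishes outside $|L-\ell| \leq n \leq L+\ell$, the infinite sum truncates to the finite one stated in \eqref{eqn general}. I do not expect a substantive obstacle; the only mildly delicate points are justifying the interchange of summation and integration (immediate from uniform convergence when $r \neq s$) and observing that both sides are continuous in $(r,s) \in (0,\infty)^2$, so the borderline $r = s$ is automatic.
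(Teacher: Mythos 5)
Your proposal is correct, and it rests on the same two pillars as the paper's proof -- the multipole expansion of the Coulomb kernel in Legendre polynomials of ${\bf \hat x}\cdot{\bf \hat y}$ and the linearization of a product of two Legendre polynomials via squared Wigner 3j-symbols -- but you organize the angular integration differently. The paper never leaves the sphere: it expands $P_n P_\ell$ into single Legendre polynomials, converts each $P_k({\bf \hat x}\cdot{\bf \hat y})$ into spherical harmonics by the addition theorem \eqref{LegProp}, and kills all but the $(k,m)=(L,M)$ term by orthonormality of the $Y_{km}$. You instead invoke Funk--Hecke up front, which extracts the $Y_{LM}({\bf \hat x})$ factor in one stroke and reduces everything to the one-dimensional integral $\int_{-1}^1 P_\ell P_L P_n\,dt = 2\left(\begin{smallmatrix} L & \ell & n \\ 0 & 0 & 0\end{smallmatrix}\right)^2$; your constants check out ($\frac{1}{4\pi}\cdot 2\pi\cdot\frac{1}{2}\cdot 2 = 1$, matching the right-hand side). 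The two routes are mathematically equivalent -- Funk--Hecke is itself a repackaging of the addition theorem plus orthogonality, and the triple Legendre integral is the linearization formula read against $\int_{-1}^1 P_k P_n = \frac{2}{2n+1}\delta_{kn}$ -- so your version is a legitimate, somewhat more streamlined alternative at the cost of citing a bigger black box. Two small points: your claim that continuity of the left-hand side in $(r,s)$ handles $r=s$ deserves a word of justification (e.g.\ dominated convergence using $|r{\bf \hat x}-s{\bf \hat y}|^2 \geq 2rs(1-{\bf \hat x}\cdot{\bf \hat y})$, which is essentially what the paper does with a monotone sequence $r_n\downarrow s$); alternatively, note that your $F(t)=P_\ell(t)/\sqrt{r^2-2rst+s^2}$ is in $L^1([-1,1])$ even when $r=s$, so Funk--Hecke applies directly there and only the term-by-term integration of the generating function needs the limiting argument.
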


\noindent\emph{Remark.} 
An easy consequence of this proposition is that for all $\ell,\ell' \in \N_0$
\begin{equation}
\label{exchange kernel}
\frac{1}{(4\pi)^2} \int_{(\mathbb{S}^2)^2} \frac{P_{\ell}({\bf \hat x} \cdot {\bf \hat y})P_{\ell'}({\bf \hat x} \cdot {\bf \hat y})}{|r{\bf \hat x}-s{\bf \hat y}|} \,d\sigma({\bf \hat x},{\bf \hat y}) = \sum_{k=|\ell-\ell'|}^{\ell+\ell'} \begin{pmatrix} \ell & \ell' & k \\ 0 & 0 & 0 \end{pmatrix}^2 \frac{\min\{r,s\}^k}{\max\{r,s\}^{k+1}}.
\end{equation}
This is seen by multiplying \eqref{eqn general} with $\overline{Y_{LM}({\bf \hat x})}$, integrating over $\mathbb{S}^2$ with respect to ${\bf\hat x}$ and summing over $M=-L,\dots,L$.

\begin{proof}
Assume first that $r\neq s$. 
For fixed ${\bf \hat x} \in \mathbb{S}^2$ the series expansion 
$$
\frac{1}{|r{\bf \hat x}-s{\bf \hat y}|} = \frac{1}{r_>} \sum_{n=0}^\infty \left( \frac{r_<}{r_>} \right)^n P_n({\bf \hat x}\cdot {\bf \hat y})
$$
converges pointwise for all ${\bf \hat y}\in \mathbb{S}^2$ and thus in $L^2(\mathbb{S}^2)$ because $\sum_{n=0}^N \left( \frac{r_<}{r_>} \right)^n P_n({\bf\hat x} \cdot {\bf\hat y})$ is bounded uniformly in $N$ and ${\bf\hat y}$. We get
\begin{eqnarray*}
\frac{P_\ell({\bf\hat x} \cdot {\bf\hat y})}{|r{\bf\hat x}-s{\bf\hat y}|} & = & \frac{1}{r_>} \sum_{n=0}^\infty \left( \frac{r_<}{r_>} \right)^n P_n({\bf\hat x} \cdot {\bf\hat y}) P_\ell({\bf\hat x} \cdot {\bf\hat y}) \\
& = & \frac{1}{r_>} \sum_{n=0}^\infty \left( \frac{r_<}{r_>} \right)^n  \sum_{k=|\ell-n|}^{\ell+n} (2k+1) \begin{pmatrix} k & \ell & n \\ 0 & 0 & 0 \end{pmatrix}^2 P_k({\bf\hat x} \cdot {\bf\hat y})
\end{eqnarray*}
where we used the addition theorem
$$
P_n(z)P_\ell(z) = \sum_{k=|\ell-n|}^{\ell+n} (2k+1) \begin{pmatrix} k & \ell & n \\ 0 & 0 & 0 \end{pmatrix}^2 P_k(z).
$$
The addition theorem \eqref{LegProp} allows us to compute
\begin{eqnarray*}
\lefteqn{\frac{1}{4\pi}\int_{\mathbb{S}^2} \frac{P_{\ell}({\bf\hat x} \cdot {\bf\hat y})Y_{LM}({\bf \hat y})}{|r{\bf\hat x}-s{\bf\hat y}|} \,d\sigma({\bf\hat y})} \\
& = &  \frac{1}{r_>} \sum_{n=0}^\infty \left( \frac{r_<}{r_>} \right)^n \sum_{k=|\ell-n|}^{\ell+n} \begin{pmatrix} k & \ell & n \\ 0 & 0 & 0 \end{pmatrix}^2 \sum_{m=-k}^k Y_{km}({\bf\hat x}) \int_{\mathbb{S}^2} \overline{Y_{km}({\bf\hat y})} Y_{LM}({\bf\hat y}) \,d\sigma({\bf\hat y}) \\
& = &  Y_{LM}({\bf\hat x}) \sum_{n=0}^{\infty} \begin{pmatrix} L & \ell & n \\ 0 & 0 & 0 \end{pmatrix}^2 \frac{\min\{r,s\}^n}{\max\{r,s\}^{n+1}}.
\end{eqnarray*}
The desired equation for $r \neq s$ now follows from the fact that the Wigner 3j-symbols vanish unless $|L-\ell| \leq n \leq L+\ell$. The case $r=s$ can be derived from the above result by choosing a sequence $r_n \downarrow s$. Clearly, $\frac{1}{|r_n {\bf\hat x}-s{\bf\hat y}|} \uparrow \frac{1}{|s{\bf\hat x}-s{\bf\hat y}|}$ for all ${\bf\hat y} \in \mathbb{S}^2 \setminus \{{\bf\hat x}\}$ and $\frac{1}{|{\bf\hat x}- {\bf\hat y}|}$ is integrable with respect to ${\bf\hat y} \in \mathbb{S}^2$. Hence Lebesgue's Dominated Convergence Theorem may be used to see that the formula is also true for $r=s$.
\end{proof}

Let us turn to the derivation of $\EE^{RHF}$. 
If $f_1,\dots,f_{s_0}$ are in $H^1_0(\R_+)$, then the functions $\ph_{jm\sigma}$ defined by \eqref{restr form} are orthonormal in $L^2(\R^3;\C^2)$, and $\ph_{jm\sigma} \in H^1(\R^3;\C^2)$ by Hardy's inequality
\begin{equation}\label{Hardy}
\int_{\R_+} \frac{|f(r)|^2}{r^2} \,dr \leq 4 \int_{\R_+} |f'(r)|^2 \,dr
\end{equation}
for $f \in H^1_0(\R_+)$.
Using the addition theorem \eqref{LegProp}, the corresponding density matrix $\tau$ and electronic density $\rho$ take the form
\begin{eqnarray}
\tau(x,y) & = & \delta_{\mu_x \mu_y} \sum_{j=1}^{s_0} \frac{2\ell_j+1}{4\pi}  \frac{f_j(|{\bf x}|)}{|{\bf x}|}\frac{\overline{f_j(|{\bf y}|)}}{|{\bf y}|} P_{\ell_j}({\bf \hat x} \cdot {\bf \hat y}), \label{tau explicit} \\
\rho(x) & = & \sum_{j=1}^{s_0} \frac{2\ell_j+1}{4\pi} \frac{|f_j(|{\bf x}|)|^2}{|{\bf x}|^2}.
\end{eqnarray}
Here we abbreviate ${\bf \hat x} := {\bf x}/ |{\bf x} |$ for all $0 \neq {\bf x} \in \R^3$. 
If the general Hartree--Fock functional \eqref{HF functional} is evaluated at the functions $\ph_{jm\sigma}$, the only term, which is not trivially computed, is the exchange term:
\begin{eqnarray*}
\int\!\!\!\int \frac{|\tau(x,y)|^2}{|x-y|} \,dx\,dy & = & 2 \sum_{j,k=1}^{s_0} \frac{(2\ell_j+1)(2\ell_k+1)}{(4\pi)^2} \int_{(\R_+)^2} \,dr\,ds \overline{f_j(r)f_k(s)}f_k(r)f_j(s) \times \\
& & \times \int_{(\mathbb{S}^2)^2} \,d\sigma({\bf \hat x}, {\bf \hat y}) \frac{P_{\ell_j}({\bf \hat x} \cdot {\bf \hat y}) P_{\ell_k}({\bf \hat x} \cdot {\bf \hat y})}{|r{\bf \hat x} - s {\bf \hat y}|}. 
\end{eqnarray*}
Using \eqref{exchange kernel}, the form of \eqref{rest energy} follows from the choice
\begin{equation}\label{Uform}
U_{\ell \ell'}(r,s) = \sum_{k=|\ell-\ell'|}^{\ell+\ell'} \begin{pmatrix} \ell & \ell' & k \\ 0 & 0 & 0  \end{pmatrix}^2 \frac{\min\{r,s\}^k}{\max\{r,s\}^{k+1}}.
\end{equation}

\section{Appendix}\label{sec:lm}

\begin{lemma}\label{lm:U-ell}
Let $\ell,\ell' \in \N_0$, and $r,s >0$. Then the functions $U_{\ell\ell'}$ defined by \eqref{Uform} obey:
\begin{itemize}
\item[(U1)] $U_{\ell\ell'}(r,s) = U_{\ell' \ell}(r,s) = U_{\ell \ell'}(s,r)$,
\item[(U2)] $0 \leq U_{\ell \ell'}(r,s) \leq \max\{r,s\}^{-1},$
\item[(U3)] $\displaystyle U_{\ell\ell}(r,s) \geq \frac{1}{2\ell +1} \frac{1}{\max\{r,s\}}$,
\item[(U4)] For all $g \in H^1_0(\R_+)$ the integral kernels $g(r) U_{\ell\ell'}(r,s) \overline{g(s)}$ define non-negative Hilbert--Schmidt operators on $L^2(\R_+)$.
\end{itemize}
\end{lemma}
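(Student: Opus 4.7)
The plan is to establish the four items by short, mostly independent arguments, leaving (U4) for last as the only nontrivial one. Property (U1) is immediate from the defining formula \eqref{Uform}: the factor $r_<^k/r_>^{k+1}$ is manifestly symmetric in $r,s$, while the Wigner 3j-symbol $\begin{pmatrix} \ell & \ell' & k \\ 0 & 0 & 0 \end{pmatrix}$ is invariant under column permutations (the zero second row produces no phase), giving the symmetry in $(\ell,\ell')$.

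For (U2), non-negativity is term-by-term since each 3j-symbol squared and each $r_<^k/r_>^{k+1}$ is non-negative. Setting $c_k := \begin{pmatrix} \ell & \ell' & k \\ 0 & 0 & 0 \end{pmatrix}^2$ and using $r_<^k/r_>^{k+1} \leq 1/r_>$, the upper bound reduces to $\sum_k c_k \leq 1$. I would obtain this by evaluating the Legendre product expansion
$$P_\ell(z)\,P_{\ell'}(z) = \sum_{k=|\ell-\ell'|}^{\ell+\ell'} (2k+1)\,c_k\,P_k(z),$$
used in the derivation of $\EE^{RHF}$ in Section~\ref{app}, at $z=1$: since $P_n(1)=1$ this yields $\sum_k (2k+1)\,c_k = 1$, and hence $\sum_k c_k \leq 1$ because $2k+1 \geq 1$. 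For (U3), the case $\ell=\ell'$ puts $k=0$ in the summation range, so the identity $\begin{pmatrix} \ell & \ell & 0 \\ 0 & 0 & 0 \end{pmatrix}^2 = (2\ell+1)^{-1}$, combined with non-negativity of the remaining terms, yields the lower bound at once.

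Property (U4) is the main technical step. The crucial ingredient is the pointwise integral representation
$$\frac{r_<^k}{r_>^{k+1}} = (2k+1)\int_0^\infty \frac{t^{2k}}{r^{k+1}\,s^{k+1}}\,\mathbf{1}_{[0,r]}(t)\,\mathbf{1}_{[0,s]}(t)\,dt,$$
which follows by direct computation (the inner integral equals $\min\{r,s\}^{2k+1}/(2k+1)$). Writing $\psi_{t,k}(r) := t^k\mathbf{1}_{[0,r]}(t)/r^{k+1}$ and inserting this into the definition of $U_{\ell\ell'}$, one finds for every $h \in L^2(\R_+)$
$$\int\!\!\!\int \overline{h(r)}\,g(r)\,U_{\ell\ell'}(r,s)\,\overline{g(s)}\,h(s)\,dr\,ds \;=\; \sum_k (2k+1)\,c_k \int_0^\infty \Bigl| \int \overline{h(r)}\,g(r)\,\psi_{t,k}(r)\,dr \Bigr|^2 dt \;\geq\; 0,$$
proving non-negativity of the associated operator. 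For the Hilbert--Schmidt property, using (U2) and $\max\{r,s\}^2 \geq rs$, the squared Hilbert--Schmidt norm is bounded by $\bigl(\int_{\R_+} |g(r)|^2\,r^{-1}\,dr\bigr)^2$, and the inner integral is finite by Cauchy--Schwarz ($\int |g|^2/r \leq \|g\|_2\,\|g/r\|_2$) together with Hardy's inequality \eqref{Hardy}. The only step I regard as non-routine is verifying the integral representation above; everything else then follows with little additional effort.
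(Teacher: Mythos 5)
Your proof is correct, but for the two non-trivial items it takes a genuinely different route from the paper. For (U2) the paper goes back to the spherical representation \eqref{exchange kernel} and applies Cauchy--Schwarz together with \eqref{LegProp} on $\mathbb{S}^2\times\mathbb{S}^2$ to get the bound $\max\{r,s\}^{-1}$; you instead stay on the half-line and use the normalization $\sum_k (2k+1)\,c_k = P_\ell(1)P_{\ell'}(1)=1$ obtained by evaluating the Legendre product expansion at $z=1$, which is a cleaner, purely algebraic argument (it still borrows the product formula quoted in Section~\ref{app}, so nothing extra is needed). For (U4) the paper lifts the problem to $\R^3$: it builds orbitals $\ph_m$ and $\ph$ out of $g$, $f$ and spherical harmonics, identifies $\sprod{f}{Kf}$ with an exchange integral $\int\!\!\int \overline{\ph}\,\tau\,\ph\,|x-y|^{-1}$ via Proposition~\ref{prop:representation formula}, and concludes from the positive definiteness of the Coulomb kernel. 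You instead exhibit each summand $r_<^k/r_>^{k+1}$ of \eqref{Uform} as $(2k+1)\int_0^\infty \psi_{t,k}(r)\overline{\psi_{t,k}(s)}\,dt$ (the representation checks out: the $t$-integral gives $r_<^{2k+1}/\bigl((2k+1)(rs)^{k+1}\bigr)$), so positivity is manifest term by term without leaving $L^2(\R_+)$; this is more elementary and self-contained, at the price of depending on the explicit series form of $U_{\ell\ell'}$ rather than on its origin as a Coulomb exchange kernel. Two small points you should tidy up: the interchange of the $t$-integral with the $(r,s)$-integrals needs a one-line Fubini justification (absolute integrability follows from $\max\{r,s\}^{-1}\leq (rs)^{-1/2}$, Cauchy--Schwarz and Hardy's inequality \eqref{Hardy}, exactly as in your Hilbert--Schmidt estimate); and the symmetry of the 3j-symbol under column exchange holds for the \emph{square} because the symbol with vanishing lower row is zero unless $\ell+\ell'+k$ is even, so the phase $(-1)^{\ell+\ell'+k}$ is harmless --- your parenthetical reason is slightly off but the conclusion stands.
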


\begin{proof}
(U1) and (U3) are obvious from the explicit representation of $U_{\ell\ell'}(r,s)$ and 
$$\begin{pmatrix} \ell & \ell & 0 \\ 0 & 0 & 0 \end{pmatrix}^2 = \frac{1}{2\ell+1}.$$ (U2) The positivity of $U_{\ell \ell'}$ is clear, the upper bound can be proved using \eqref{exchange kernel}, \eqref{LegProp} and Cauchy--Schwarz:
\begin{eqnarray*}
\lefteqn{U_{\ell \ell'} (r,s) = \frac{1}{(4\pi)^2} \int_{(\mathbb{S}^2)^2} \frac{P_{\ell}({\bf \hat x} \cdot {\bf \hat y})P_{\ell'}({\bf \hat x} \cdot {\bf \hat y})}{|r{\bf \hat x}-s{\bf \hat y}|} \,d\sigma({\bf \hat x},{\bf \hat y})} \\
& \leq & \frac{1}{(2\ell+1)(2\ell' +1)} \sum_{m=-\ell}^{\ell} \sum_{m'=-\ell'}^{\ell'} \left( \int_{(\mathbb{S}^2)^2} \frac{|Y_{\ell m}({\bf \hat x})|^2 |Y_{\ell' m'}({\bf \hat y})|^2}{|r{\bf \hat x}-s{\bf \hat y}|} \,d\sigma({\bf \hat x},{\bf \hat y}) \right)^{1/2} \times \\
& & \quad\qquad\qquad\qquad\qquad\qquad\qquad \qquad \times \left( \int_{(\mathbb{S}^2)^2} \frac{|Y_{\ell' m'}({\bf \hat x})|^2 |Y_{\ell m}({\bf \hat y})|^2}{|r{\bf \hat x}-s{\bf \hat y}|} \,d\sigma({\bf \hat x},{\bf \hat y}) \right)^{1/2} \\
& \leq & \frac{1}{(4\pi)^2} \int_{(\mathbb{S}^2)^2} \frac{1}{|r{\bf \hat x}-s{\bf \hat y}|} \,d\sigma({\bf \hat x},{\bf \hat y}) = \frac{1}{\max\{r,s\}},
\end{eqnarray*}
where we used $2ab \leq a^2 + b^2$ and \eqref{LegProp} in the last inequality.

(U4) The integral kernels $K(r,s) := g(r) U_{\ell\ell'}(r,s) \overline{g(s)}$ are in $L^2(\R_+^2)$ by (U2) and by Hardy's inequality \eqref{Hardy}, which shows that the corresponding integral operators are Hilbert--Schmidt. Moreover, let
$$
\ph_m({\mathbf x},\mu) := \frac{g(|{\mathbf x}|)}{|{\mathbf x}|} Y_{\ell m}({\mathbf x}) \delta_{\mu,+1}, \quad m=-\ell,\dots,\ell,
$$
and
$$
\tau(x,y)  :=  \sum_{m=-\ell}^\ell \ph_m(x)\overline{\ph_m(y)} = \delta_{\mu_x,+1} \delta_{\mu_y,+1} \frac{2\ell + 1}{4\pi}  \frac{g(|{\bf x}|)}{|{\bf x}|}\frac{\overline{g(|{\bf y}|)}}{|{\bf y}|} P_{\ell}({\bf \hat x} \cdot {\bf \hat y}).
$$
Given $f \in L^2(\R_+)$, we define
$$
\ph({\mathbf x},\mu) := \frac{f(|{\mathbf x}|)}{|{\mathbf x}|} Y_{\ell' 0}({\mathbf x}) \delta_{\mu,+1},
$$
then 
$$
\int\!\!\!\int \frac{\overline{\ph(x)}\tau(x,y)\ph(y)}{|x-y|} \,dx\,dy = (2\ell + 1)\int\!\!\!\int \overline{f(r)} K(r,s) f(s) \,dr\,ds.
$$
The last equality can be computed using \eqref{eqn general} and \eqref{Uform}.
Hence, the non-negativity of the integral operator corresponding to $K$ follows from the non-negativity of the term on the left-hand side.
\end{proof}

\begin{lemma}\label{lm:technical}
\begin{itemize}
\item[(i)] For all $f \in H^1_0(\R_+)$ and $\eps >0$: $\displaystyle \sprod{f}{\frac{1}{r}f} \leq \eps \norm{f'}^2 + \frac{1}{\eps} \norm{f}^2$.
\item[(ii)] Let $s_0 \in \N$, $\ell_1,\dots,\ell_{s_0} \in \N_0$, $Z>0$, $f_1,\dots,f_{s_0} \in H^1_0(\R_+)$, and $\eps>0$. Then:
$$
\EE^{RHF}(f_1,\dots,f_{s_0}) \geq 2 \sum_{j=1}^{s_0} (2\ell_j+1) \left[ (1-Z\eps) \| f_j' \|^2 - \frac{Z}{\eps} \norm{f_j}^2 \right].
$$
\item[(iii)] Let $s_0 \in\N$, $\ell_1,\dots, \ell_{s_0} \in \N_0$, and $f_1,\dots,f_{s_0} \in H_0^1(\R_+)$. Then for all $\ell \in \N_0$: 
$$
0 \leq K_{\ell} \leq U \leq \sum_{k=1}^{s_0} (2\ell_k+1) (\|f_k'\|^2+\|f_k\|^2).
$$
\item[(iv)] Let $\ell,\ell' \in \N_0$. Then the following maps are weakly sequentially continuous on $H_0^1(\R_+)$ resp.~$H_0^1(\R_+) \times H_0^1(\R_+)$: 
\begin{eqnarray*}
f & \mapsto & \sprod{f}{\frac{1}{r}f}, \\
(f,g) & \mapsto & \form{f \otimes g}{\max\{r,s\}^{-1}}{f \otimes g}, \\
(f,g) & \mapsto & \form{f \otimes g}{U_{\ell \ell'}}{g \otimes f}.
\end{eqnarray*}
\item[(v)] The functional $\EE^{RHF}$ is weakly sequentially lower semicontinuous on $\times_{i=1}^{N} H_0^1(\R_+)$.
\end{itemize}
\end{lemma}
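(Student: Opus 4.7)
The plan is to prove the five parts in order, leaning on part (i) as the basic one-dimensional weighted inequality, on Lemma~\ref{lm:U-ell} for every inequality involving $U_{\ell\ell'}$, and on part (iv) as the crucial compactness input that drives (v).

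For (i), I would chain Hardy's inequality \eqref{Hardy} with Cauchy--Schwarz:
$$
\sprod{f}{\tfrac{1}{r}f} \;=\; \int \tfrac{|f|}{r}\,|f|\,dr \;\leq\; \|f/r\|\,\|f\| \;\leq\; 2\|f'\|\,\|f\|,
$$
and then absorb via $2ab \leq \eps a^2 + b^2/\eps$. For (ii), I would drop the non-negative centrifugal term and then check that the whole electron--electron interaction in \eqref{rest energy} is non-negative: using (U2) one has $U_{\ell_j\ell_k}\leq 1/r_>$, so by Cauchy--Schwarz followed by symmetrization $r \leftrightarrow s$ each exchange integral $X_{jk}$ is bounded by the corresponding direct integral $D_{jk}$, hence $4D_{jk}-2X_{jk}\geq 2D_{jk}\geq 0$. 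What remains is the one-body part, which is bounded below by $2\sum_j(2\ell_j+1)[\|f_j'\|^2 - Z\sprod{f_j}{(1/r)f_j}]$, and (i) finishes the job. For (iii), $K_\ell \geq 0$ is exactly (U4). The bound $K_\ell \leq U$ follows from the same Cauchy--Schwarz/symmetrization trick together with (U2):
$$
\sprod{f}{K_\ell f} \leq \sum_j (2\ell_j+1)\int |f(r)||f_j(r)|\,U_{\ell\ell_j}(r,s)\,|f_j(s)||f(s)|\,dr\,ds \leq \sprod{f}{U f}.
$$
Since $U$ is multiplication by $u(r)=\sum_k(2\ell_k+1)\int |f_k(s)|^2/r_> \,ds$, the pointwise bound uses $1/r_>\leq 1/s$ followed by (i) with $\eps=1$.

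The technical heart of the lemma is (iv), which I would reduce to compactness of the embedding $f\mapsto f/\sqrt{r}$ from $H_0^1(\R_+)$ into $L^2(\R_+)$. To see that it is compact, I would split $\R_+=(0,\delta)\cup[\delta,R]\cup(R,\infty)$: on the middle piece, $1/\sqrt{r}$ is bounded and Rellich gives compactness of $H_0^1\hookrightarrow L^2$; on $(0,\delta)$ I would use $|f(r)|\leq \sqrt{r}\,\|f'\|$ (since $f(0)=0$) to get $\int_0^\delta |f|^2/r\,dr \leq \delta\|f'\|^2$; on $(R,\infty)$ I would use $\int_R^\infty |f|^2/r\,dr \leq \|f\|^2/R$. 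Thus both tails are uniformly small on norm-bounded sets, proving compactness. The first map in (iv) then gives $\|f_n/\sqrt{r}\|\to\|f/\sqrt{r}\|$, which is exactly weak continuity of $f\mapsto \sprod{f}{(1/r)f}$. For the two bilinear maps I would use the same compactness: the kernel $1/r_>$ satisfies $1/r_>\leq 1/s$ (and $\leq 1/r$ by symmetry), which lets me separate variables and reduce convergence of $\form{f_n\otimes g_n}{1/r_>}{f_n\otimes g_n}$ to strong $L^2$ convergence of $f_n/\sqrt{r}$ and $g_n/\sqrt{r}$, while (U2) dominates the $U_{\ell\ell'}$-kernel pointwise by $1/r_>$, so the exchange-type bilinear form is controlled by the same compactness.

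Part (v) is then assembled from the previous parts: on $H_0^1(\R_+)$ the seminorms $f\mapsto\|f'\|$ and $f\mapsto \|f/r\|$ are weakly lower semicontinuous (as squared norms of continuous seminorms), so the kinetic and centrifugal contributions to $\EE^{RHF}$ are weakly l.s.c.; the Coulomb attraction $-Z\sprod{f_j}{(1/r)f_j}$ and both the direct and exchange electron--electron terms are weakly sequentially continuous by (iv). The sum is therefore weakly sequentially lower semicontinuous on $\times_{i=1}^{s_0}H_0^1(\R_+)$. The step I expect to be the main obstacle is the compactness argument underlying (iv), in particular handling the joint weak convergence $(f_n,g_n)\rightharpoonup(f,g)$ in the bilinear forms with kernels $1/r_>$ and $U_{\ell\ell'}$; separate compactness in each argument is not quite enough, and one has to combine it with the factorising bound $1/r_>\leq(rs)^{-1/2}$ (or the splitting above) to control the product.
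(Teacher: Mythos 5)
Your proposal is correct and follows essentially the same route as the paper: Hardy plus Cauchy--Schwarz for (i), domination of each exchange integral by the corresponding direct integral via (U2) and symmetrization for (ii) and (iii), Rellich on compact subintervals combined with uniform control of the tails for (iv), and weak lower semicontinuity of the kinetic/centrifugal part plus weak continuity of the rest for (v). The only cosmetic difference is that you package (iv) as compactness of the embedding $f \mapsto f/\sqrt{r}$ from $H^1_0(\R_+)$ into $L^2(\R_+)$ (with the factorization $\max\{r,s\}^{-1} \leq (rs)^{-1/2}$ for the bilinear forms), whereas the paper runs the same splitting directly on the difference of the quadratic forms; both arguments are sound.
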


\begin{proof}
(i) and (iii) follow easily from the Cauchy--Schwarz and the Hardy inequalities \eqref{Hardy}, (U1), (U2), and (U4). 
To prove (ii) fix $j,k \in \{1,\dots,s_0\}$. Using Cauchy--Schwarz, (U1) and (U2) we obtain
\begin{eqnarray*}
& & \left| \int\!\!\!\int \overline{f_j(r) f_k(s)} U_{\ell_j \ell_k}(r,s) f_k(r) f_j(s) \,dr\,ds \right| \\
& \leq & \left( \int\!\!\!\int |f_j(r)|^2 |f_k(s)|^2 U_{\ell_j \ell_k}(r,s) \,dr \,ds \right)^{\frac{1}{2}} \left( \int\!\!\!\int |f_k(r)|^2 |f_j(s)|^2 U_{\ell_j \ell_k}(r,s) \,dr \,ds \right)^{\frac{1}{2}} \\
& = & \int\!\!\!\int |f_j(r)|^2 |f_k(s)|^2 U_{\ell_j \ell_k}(r,s) \,dr \,ds \leq \int\!\!\!\int \frac{|f_j(r)|^2 |f_k(s)|^2}{\max\{r,s\}}  \,dr \,ds.
\end{eqnarray*}
Therefore,
$$
\EE^{RHF}(f_1,\dots,f_{s_0}) \geq 2 \sum_{j=1}^{s_0} (2\ell_j +1) \left( \|f_j' \|^2 - Z \sprod{f_j}{\frac{1}{r} f_j} \right).
$$
The claim now follows immediately from (i).

(iv) Let $f_n \rightharpoonup f$ weakly in $H^1_0(\R_+)$. Due to the Rellich--Kondrashov theorem, $f_n$ converges to $f$ uniformly in $\R_+$.
To prove the weak continuity of the Coulomb potential we first use 
$$
\left| \sprod{f_n}{\frac{1}{r}f_n} - \sprod{f}{\frac{1}{r}f} \right| \leq 
\left| \sprod{f_n-f}{\frac{1}{r}f_n} \right| + \left| \sprod{\frac{1}{r}f}{f_n-f} \right| = (*) + (**).
$$
For $R>0$ we obtain using Cauchy--Schwarz and Hardy's inequality \eqref{Hardy}
\begin{eqnarray*}
(*) & \leq & \int_0^R \frac{|f_n(r)-f(r)| |f_n(r)|}{r} \,dr + \frac{1}{R} \int_R^\infty |f_n(r)-f(r)| |f_n(r)| \,dr \\
& \leq & \left( \int_0^R |f_n(r)-f(r)|^2 \,dr \right)^{1/2} \left( \int_0^\infty \frac{|f_n(r)|^2}{r^2} \,dr \right)^{1/2} + \frac{1}{R}\norm{f_n-f}\norm{f_n} \\
& \leq & 2 \sqrt{R} \sup_{r \in (0,R)} \{|f_n(r)-f(r)|\} \norm{f_n'} + \frac{1}{R} \left( \norm{f_n} + \norm{f} \right) \norm{f_n}.
\end{eqnarray*}
Since $\| f_n' \|$, $\| f \|,$ and $\| f_n\|$ are uniformly bounded in $n$, we can first choose $R$ large to make the second term small, then choose $n$ large to make the first term small. $(**)$ can be estimated analogously. 
The weak continuity of the other maps can be seen with a similar decomposition argument as shown above for the Coulomb potential.

(v) Let $f_j^{(n)} \rightharpoonup f_j$ weakly in $H_0^1(\R_+)$ for $j=1,\dots,N$. Clearly,
$$
\form{f_j}{-\partial_r^2 + \frac{\ell_j(\ell_j+1)}{r^2}}{f_j} \leq \liminf_{n \to \infty} \form{f_j^{(n)}}{-\partial_r^2 + \frac{\ell_j(\ell_j+1)}{r^2}}{f_j^{(n)}},
$$
since $f_j^{(n)} \rightharpoonup f_j$ in $H^1_0(\R_+)$ implies $\partial_r f_j^{(n)} \rightharpoonup \partial_r f_j$ in $L^2(\R_+)$ for the first term, and using the lemma of Fatou for the second term. The remaining terms of $\EE^{RHF}$ are weakly sequentially continuous as shown in (iv).
\end{proof}

\end{document}